\documentclass[aps,pra,reprint,twocolumn,floatfix]{revtex4-2}
\usepackage{float}
\usepackage{orcidlink}
\usetikzlibrary{positioning}
\usepackage{algpseudocode}
\usepackage[linesnumbered,ruled,vlined]{algorithm2e}
\usepackage{amsmath,amssymb}
\usepackage{amsfonts}
\usepackage{graphicx} 
\usepackage[a4paper, top=0.75in, bottom=1in, left=0.75in, right=0.75in]{geometry} 
\usepackage{comment}
\usepackage{tikz}
\usetikzlibrary{positioning}
\usepackage{dsfont}
\usepackage[edges]{forest}
\usepackage{amsmath}
\usepackage{adjustbox}
\usepackage{enumitem}
\usepackage{mathrsfs}
\usepackage{graphicx}
\usepackage{braket}
\usepackage{amsthm}

\usepackage{tikz}
\usepackage{quantikz}
\usepackage{dsfont}
\usepackage{eso-pic}
\usetikzlibrary{arrows.meta,calc,positioning}
\usepackage{hyperref}

\newtheorem{theorem}{Theorem}[section] 
\newtheorem{proposition}[theorem]{Proposition}
\newtheorem{corollary}[theorem]{Corollary}

\theoremstyle{definition}
\newtheorem*{definition}{Definition}

\theoremstyle{remark}
\newtheorem*{remark}{Remark}

\newcommand{\kagwavecopyright}{%
  \AddToShipoutPictureFG*{%
    \AtPageLowerLeft{%
      \hspace*{0.5in}%
      \raisebox{0.25in}{%
        \scriptsize*\textcopyright\ KAGWAVE.%
      }%
    }%
  }%
}

\extrafloats{100}
\begin{document}

\kagwavecopyright

\title{The Quantum Character Transform}
\author{
Kagwe A. Muchane\orcidlink{0009-0008-6649-3975}*%
}
\affiliation{
Department of Computer Science, North Carolina State University, Raleigh, NC 27606, USA \\
Department of Mathematics, North Carolina State University, Raleigh, NC 27606, USA
}

\begin{abstract}
Bastidas et al. (2025) highlighted the importance of exploiting group symmetries to reduce \mbox{computational} overhead and unify simulation methods across both condensed matter physics and ab initio quantum chemistry. I discuss the discrete character transform, which is the group Fourier transform restricted to $ Z(\mathbb{C}[\mathcal G])$, the center of the group algebra, where class functions form a commutative convolution algebra and irreducible characters are its spectral basis. The normalized character \mbox{table} of a non-abelian group can be obtained through a restriction operator, which means an efficient \mbox{quantum} character transform for any finite group with an efficient quantum Fourier transform. I give complementary quantum algorithms for class convolution based on direct class-sum block \mbox{encoding} and spectral implementation in the character basis, explaining their applications to symmetry-adapted many-body dynamics, molecular and spectroscopic symmetry analysis, and quantum walks, as well as characterizing the central information used in hidden-subgroup problems, including its limitations for graph isomorphism.
\end{abstract}

\maketitle

\pagestyle{plain}

\section{Introduction}

Quantum chemistry is widely considered one of the most promising and impactful commercial applications for early quantum computers, because molecular electronic structure is in essence quantum mechanical and its accurate classical description can require resources that grow prohibitively with system size \cite{mcardle2020quantum, cao2019quantum}. Yet, representing a molecular Hamiltonian on a quantum computer does not by itself make a calculation efficient. Practical algorithms must also identify and exploit the symmetries that constrain the physically relevant state space. Point-group, permutation, spin, and other finite symmetries decompose molecular Hilbert spaces into invariant sectors, where states and observables can be organized by irreducible representations. Character tables describe exactly this decomposition, supporting the classification of molecular orbitals and normal modes, the derivation of infrared and Raman selection rules, and the block diagonalization of symmetry-preserving Hamiltonians \cite{hamermesh1962group,cotton1990chemical,birman1974theory,Bunker1979}. Their pure mathematical foundations reach back to Frobenius’s analysis of the group determinant \cite{frobenius1896gruppencharaktere,frobenius1897gruppendeterminante}, Burnside’s systematization of finite-group representation theory \cite{burnside1897theory,burnside1911theory}, and Schur’s development of linear and projective representation theory \cite{schur1904darstellung,schur1905neue}. As chemistry becomes a major target for quantum hardware, computing and manipulating character data has become more than a classificatory concern.

The abelian quantum Fourier transform and Kitaev’s phase-estimation algorithm showed that spectral information can be extracted efficiently under a change of basis. Beyond the abelian case, efficient transforms over specific nonabelian families, most prominently Beals’s transform for the symmetric group \cite{Beals1997} and the Schur transform of Bacon, Chuang, and Harrow \cite{BaconChuangHarrow2006}, proved that noncommutative harmonic analysis can also be implemented efficiently when representation bases and branching rules allow recursive quantum circuits. In the literature, the term "character transform" has described two distinct operations. Greg Kuperberg’s early use of the term \cite{Kuperberg2005} should not be confused with the class-condensed unitary operator used today. While related, in his 2003 work on the Hidden Subgroup Problem, Kuperberg uses the term to refer to a non-abelian Fourier transform on the full group representation space. Over a decade prior, a more compact, class-based change of basis was already implicit in character-theoretic literature used throughout classical chemistry to analyze molecular systems by symmetry. In Cotton's book \textit{Chemical Applications of Group Theory}, the Great Orthogonality Theorem gives the class-weighted orthogonality relations underlying molecular character tables \cite{cotton1990chemical}. With normalized conjugacy-class coordinates, these relations make the matrix with entries
\[
U_{\Gamma, C}
=
\sqrt{\frac{|C|}{|\mathcal G|}}\,
\chi_\Gamma(C)
\]
unitary, with rows indexed by irreducible characters and columns by conjugacy classes. In 2015, Bermejo-Vega and Zatloukal placed this class-to-character change of basis within quantum Fourier analysis over abelian hypergroups, showing how the group quantum Fourier transform can access the character basis of conjugation-invariant states \cite{BermejoVegaZatloukal2015}. Most recently, Bastidas, Liu, Chuang, and colleagues elevated this class-based unitary to a definitive quantum operation \cite{bastidas2025}. Recognizing that element-level tracking can become resource-intensive on near-term quantum hardware, Bastidas et al.\ deployed this class-based unitary directly inside quantum circuits as an active symmetry transform. By shifting the state vector into conjugacy-class and irreducible-representation coordinates, they enable symmetry-adapted projection and block diagonalization, such that physically relevant symmetry sectors can be isolated and manipulated coherently; they further demonstrated this symmetry-adapted subroutine for molecular hydrogen, $\mathrm{H}_2$, experimentally on IBM quantum hardware in the presence of noise.

That transform is the \textit{Quantum Character Transform}. For a finite group \(\mathcal G\), the class sums form a basis of the center \(Z(\mathbb C[\mathcal G])\), while the irreducible group characters provide its spectral basis. The class algebra is the commutative convolution algebra of conjugacy-invariant functions on \(\mathcal G\), and the normalized character table gives the unitary change of basis between normalized conjugacy-class and irreducible-character coordinates. The discrete character transform is non-abelian Fourier analysis restricted to the center of the group algebra, and because commutativity is often easier to work with, it enables efficient spectral calculus for \(Z(\mathbb C[\mathcal G])\). This has an important algorithmic consequence, as once the QCT is available, one can pass in both directions between central operators and their spectra. Given a class function \(f\), the transform exposes the spectrum of a class convolution operator \(T_f\); conversely, a prescribed eigenvalue profile over irreducible sectors creates, through the inverse QCT, an associated class operator. This operator synthesis approach situates the quantum character transform in the same conceptual family as modern quantum singular-value and polynomial transformation methods, where diagonal manipulations in the character basis become operator transformations in the original basis.

Seen through its history, the expected applications of the discrete character transform are wide. If $f$ is supported on a conjugacy class or union of classes, then $T_f$ is the adjacency operator of a normal Cayley graph, so the character transform allows direct access to its spectrum and to the quantum-walk Hamiltonians generated by such graphs. If $T_f$ is used inside a phase-estimation-style routine, irreducible sectors with distinguishable eigenvalues can be discriminated, which connects directly to symmetry projection and to the generalized symmetry-adapted constructions advocated by Bastidas et al. If one instead applies polynomial or rational functions of $T_f$, the same mechanism becomes a central spectral filter, a Green’s-function/resolvent primitive, or a tool for isolating selected symmetry sectors before simulation or measurement.  The character transform is also most compelling from the standpoint of compression. The full nonabelian QFT acts on \(\mathbb{C}[\mathcal G]\), whose dimension is $|\mathcal G|$, whereas the QCT acts only on the class algebra, whose dimension is $d = k(\mathcal G), $ the number of conjugacy classes. The natural state-space dimension of the transform is $k(\mathcal G) \leq |\mathcal G|$,  and for binary encoding, $\lceil \log_2 k(\mathcal G) \rceil$ qubits. For noncommutative groups such as $S_n$, this becomes a dramatic asymptotic compression of the representation space. That compression is capitalizable whenever the relevant data is already conjugacy-invariant, and it has a useful connection to symmetry-aware machine learning. Class-algebra features are the compressed harmonic representatives of data defined only up to relabeling or conjugation, much as invariant and equivariant architectures exploit group structure to reduce model complexity and improve generalization. 
 
I develop an algorithmic theory of the center \(Z(\mathbb C[\mathcal G])\). We first establish the discrete character transform's correct relation to the full non-abelian quantum Fourier transform, showing that the normalized character table of $\mathcal G$ is procured by restricting the group Fourier transform to the conjugation-invariant subspace and coherently removing the redundant matrix-index degrees of freedom. This general algorithm is efficient for all known finite groups admitting an efficient quantum Fourier transform, and unites earlier character-basis works in a common and comprehensive operator-theoretic framework. I then use modern group convolution methods to build quantum algorithms specifically for class convolution, and show how they support central-operator synthesis and spectral manipulation. I also apply this transform  to symmetry-adapted many-body and molecular problems, quantum walks, and hidden-subgroup settings, where the information retained by restriction to the center and the limit of its usefulness for graph-isomorphism-type problems is clarified. An open-source implementation and benchmarking framework for the QCT and the class-convolution algorithms is under active development. Preliminary quantum-hardware experiments have been conducted for the QCT over $D_n$ and class convolution over $S_4$, with scaling results and empirical comparisons against classical GPU implementations to be reported in a subsequent version of this manuscript.
\section{Group Algebras}

The group algebra $\mathbb{C}[\mathcal{G}]$ is the complex vector space spanned by the elements of $\mathcal{G}$, consisting of all formal sums
\[
x=\sum_{g\in\mathcal G}f(g)\,g,
\qquad
f:\mathcal G\to\mathbb C.
\]
Multiplication is defined by bilinearly extending the group law, making $\mathbb{C}[\mathcal{G}]$ a generally noncommutative group ring of
$\mathcal G$ over $\mathbb C$. The function algebra $\mathbb{C}(\mathcal{G})$ is the space of
complex-valued functions on $\mathcal G$, equipped with pointwise multiplication, $(f_1f_2)(g)=f_1(g)f_2(g)$. Every group algebra element $x$ is uniquely determined by its coefficient function
$f:\mathcal G\to\mathbb C$, so there is a canonical vector space isomorphism $\mathbb C[\mathcal G]\cong\mathbb C(\mathcal G)$. Although isomorphic as vector spaces, they differ as algebras, $\mathbb C(\mathcal G)$ uses pointwise multiplication, whereas under this identification, the product of two group algebra elements
$x_1=\sum_g f_1(g)g$ and
$x_2=\sum_g f_2(g)g$
is represented by the convolution
\begin{equation}
(f_1*f_2)(h)
=
\sum_{g\in\mathcal G}
f_1(g)f_2(g^{-1}h),
\label{eq-conv}
\end{equation}
which forms the algebraic foundation for Fourier analysis on $\mathcal G$.
The Fourier transform diagonalizes convolution, reducing convolution to pointwise multiplication in the Fourier basis.

The center of the group algebra, $Z(\mathbb C[\mathcal G])$, consists of the elements $x$ commuting with every other element. Since the elements $g \in \mathcal G$ form a basis of $\mathbb C[\mathcal G]$, it suffices to verify commutation with each basis element. Under the canonical vector space isomorphism
$\mathbb C[\mathcal G]\cong\mathbb C(\mathcal G)$,
this condition is the same as the coefficient function being constant on conjugacy classes, that is, to being a class function. The class algebra, $\mathrm{Cl}(\mathcal G)$, is the subalgebra of $\mathbb C[\mathcal G]$ consisting of all class functions,
\begin{equation}
\mathrm{Cl}(\mathcal G)
:=
\left\{
f\in\mathbb C(\mathcal G)
\mid
f(hgh^{-1})=f(g),
\ \forall g,h\in\mathcal G
\right\}.  
\label{eq-class-alg}
\end{equation}

As a vector space, $\mathrm{Cl}(\mathcal G)$ is indexed by the conjugacy classes of $\mathcal G$. For each conjugacy class $C \in\operatorname{Conj}(\mathcal G)$, define its indicator function by
$\mathds{1}_{C}(g)$. The collection $\{\mathds{1}_C\}$ forms a basis for $\mathrm{Cl}(\mathcal G)$. To obtain an orthonormal basis, these indicator functions must be normalized. The normalized counting inner product on $\mathbb C(\mathcal G)$ is
$
\langle f_1,f_2\rangle
=
\frac{1}{|\mathcal G|}
\sum_{g\in\mathcal G}
f_1(g)\overline{f_2(g)},
$
and following,
$
\langle\mathds{1}_C,\mathds{1}_C\rangle
=
\frac{|C|}{|\mathcal G|}.
$
The size of a conjugacy class is related to the size of its centralizer $Z_g=\{h\in\mathcal G\mid hg=gh\}$
by the orbit--stabilizer theorem,
$
|C|
=
\frac{|\mathcal G|}{|Z_g|}.
$
Hence the normalized class basis is
\begin{equation}
\{\hat{\mathds{1}}_C(g)\}
=
\begin{cases}
\sqrt{|Z_g|}, & g\in C,\\[1ex]
0, & \text{otherwise},
\end{cases}
\label{class-basis}
\end{equation}
where $g$ is any representative of the conjugacy class $C$. The class algebra inherits the inner product
$
\langle f_1,f_2\rangle
=
\sum_C
\frac{|C|}{|\mathcal G|}
f_1(C)\overline{f_2(C)},
$
under which the irreducible characters form an orthonormal basis.

\subsection{The Quantum Character Transform}

The irreducible characters are a special family of class functions, namely, the Fourier basis for the center of the group algebra, satisfying
\[
\chi_\Gamma(g)=\chi_\Gamma(hgh^{-1}),
\qquad
\forall\,g,h\in\mathcal G.
\]
Together with the normalized conjugacy-class indicator functions, they form two orthonormal bases of the class algebra.

\begin{definition}
The \emph{Discrete Character Transform} is the change of basis between the conjugacy-class basis of \(Z(\mathbb{C}[\mathcal{G}])\) and the irreducible character basis,
\[
U_{\mathrm{CT}}:
Z(\mathbb C[\mathcal G])
\longrightarrow
\mathbb C[\widehat{\mathcal G}].
\]
\end{definition}

For each conjugacy class \(C \in \operatorname{Conj}(\mathcal G)\), define its class sum
$K=\sum_{g\in C}g\in Z(\mathbb C[\mathcal G]) $,
these class sums form a basis of the center \(Z(\mathbb C[\mathcal G])\). Under amplitude encoding, the normalized indicator functions become the class states
$
|C\rangle
=
|C|^{-1/2}
\sum_{g\in C}|g\rangle.
$
Likewise, each irreducible character defines the normalized quantum state
$
|\chi_\Gamma\rangle
=
\sum_C
\sqrt{|C|/|\mathcal G|}\,
\chi_\Gamma(C)|C\rangle.
$ 
From this unitary representation, comes the Quantum Character Transform (QCT),
\begin{equation}
\sum_C a_C|C\rangle
\longmapsto
\sum_{\Gamma\in\widehat{\mathcal G}}
\left(
\sum_C
a_C\,
\chi_\Gamma(C)^*
\sqrt{\frac{|C|}{|\mathcal G|}}
\right)
|\Gamma\rangle.
\end{equation}

Bastidas et al.\ introduced the Quantum Character Transform as the normalized character table appearing in a symmetry-adapted projection circuit. Their framework indexes the input register by conjugacy classes and the output register by irreducible representations, thereby obtaining the normalized character matrix, whose
orientation as a forward or inverse transform determines whether
its entries are
$
\sqrt{|C|/|\mathcal G|}\,\chi_\Gamma(C)^*
$
or their complex conjugates.
This operational definition established the transform as a useful ingredient for coherent projection onto symmetry sectors, but left implicit the algebraic space on which it acts and its precise relation to the non-abelian group Fourier transform. Additionally, they extract the character transform from the full group Fourier through a probabilistic postselection of the matrix-index registers, which introduces a success-probability overhead. The following theorem shows that this projection is unnecessary; the redundant matrix-index degrees of freedom can instead be removed coherently and deterministically.

\begin{theorem}[Center restriction]
\label{thm:center-restriction}
The Quantum Character Transform is the restriction of the group Fourier transform to the center, 
\[
\left.
\mathscr F_{\mathcal G}
\right|_{Z(\mathbb C[\mathcal G])}
\cong
U_{\mathrm{QCT}}.
\]
\end{theorem}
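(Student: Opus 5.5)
I will work with the standard unitary group Fourier transform
\[
\mathscr F_{\mathcal G}|g\rangle=\sum_{\Gamma\in\widehat{\mathcal G}}\sqrt{\tfrac{d_\Gamma}{|\mathcal G|}}\sum_{i,j=1}^{d_\Gamma}\Gamma(g)_{ij}|\Gamma,i,j\rangle ,
\]
where the $\Gamma$ are fixed unitary irreducible representations. The plan is to evaluate $\mathscr F_{\mathcal G}$ on the class states $|C\rangle=|C|^{-1/2}\sum_{g\in C}|g\rangle$. These states form an orthonormal basis of the amplitude-encoded center $Z(\mathbb C[\mathcal G])$. I will show that the image of $|C\rangle$ lies in the span of the maximally entangled states
\[
|\Phi_\Gamma\rangle=d_\Gamma^{-1/2}\sum_i|\Gamma,i,i\rangle,
\]
with coefficients given by the normalized character table. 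The isomorphism $\cong$ in Theorem~\ref{thm:center-restriction} will then be the isometry $|\Gamma\rangle\mapsto|\Phi_\Gamma\rangle$. After composing with its inverse, the restricted transform literally equals the QCT matrix, possibly up to the complex-conjugation convention noted in the paper.

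\textbf{Step 1 (Schur's lemma).} For each class $C$, the class sum $K=\sum_{g\in C}g$ is central. Hence $\Gamma(K)=\sum_{g\in C}\Gamma(g)$ commutes with $\Gamma(h)$ for all $h$. By Schur's lemma, $\Gamma(K)=\lambda\,I_{d_\Gamma}$. Taking traces gives $\lambda=|C|\chi_\Gamma(C)/d_\Gamma$.

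\textbf{Step 2 (substitution).} Substituting into the Fourier formula gives
\[
\mathscr F_{\mathcal G}|C\rangle=\sum_\Gamma \sqrt{\tfrac{d_\Gamma}{|\mathcal G|}}\,|C|^{-1/2}\,\tfrac{|C|\chi_\Gamma(C)}{d_\Gamma}\sum_i|\Gamma,i,i\rangle=\sum_\Gamma\sqrt{\tfrac{|C|}{|\mathcal G|}}\,\chi_\Gamma(C)\,|\Phi_\Gamma\rangle .
\]
The $d_\Gamma$ factors cancel exactly. Using the conjugate-entry convention for $\mathscr F_{\mathcal G}$ instead yields $\chi_\Gamma(C)^*$, matching the displayed QCT formula.

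\textbf{Step 3 (unitarity).} The states $|\Phi_\Gamma\rangle$ are orthonormal and there are $k(\mathcal G)$ of them. Since $\mathscr F_{\mathcal G}$ is unitary, its restriction to the center is an isometry onto $\mathrm{span}\{|\Phi_\Gamma\rangle\}$. This range has dimension $k(\mathcal G)$, consistent with the count of irreducibles equalling the count of classes. The character orthogonality relations already recalled in the paper are therefore recovered for free, and the coefficient matrix is exactly $U_{\mathrm{QCT}}$.

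\textbf{Step 4 (coherent disentangling).} The remaining, and main, obstacle is to make the removal of the matrix indices coherent and deterministic rather than postselected, which is the point emphasized before the theorem. I would construct, controlled on $|\Gamma\rangle$, a unitary $V_\Gamma$ on the $(i,j)$ registers mapping $|\Phi_\Gamma\rangle\mapsto|0,0\rangle$. One concrete choice is the inverse of the standard preparation circuit for a maximally entangled state: a uniform superposition over $i<d_\Gamma$, followed by CNOT copying $i$ into $j$. Controlled on $\Gamma$, this is implementable whenever $d_\Gamma$ is classically computable from the label $\Gamma$. I would need to argue that its cost is polylogarithmic in $|\mathcal G|$ for any group with an efficient QFT. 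Then
\[
\Big(\sum_\Gamma|\Gamma\rangle\langle\Gamma|\otimes V_\Gamma\Big)\,\mathscr F_{\mathcal G}\big|_{Z}=U_{\mathrm{QCT}}\otimes|0,0\rangle ,
\]
which is the claimed identification with no success-probability loss. The delicate points are exact uniform-superposition preparation for non-power-of-two $d_\Gamma$, handled by exact amplitude amplification or a Bernoulli rotation circuit, and an encoding of $(i,j)$ compatible with the given QFT's output basis.
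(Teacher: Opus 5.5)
Your proof is correct and follows the paper's route. You apply Schur's lemma to the class sum, fix the scalar $|C|\chi_\Gamma(C)/d_\Gamma$ by taking a trace, substitute into the Fourier formula so the $d_\Gamma$ factors cancel, and identify the invariant line $|\Gamma\rangle|I_\Gamma\rangle$ with $|\Gamma\rangle$. The only difference is your use of the $\Gamma(g)$ convention instead of $\rho_\Gamma(g^{-1})$, which you correctly flag as giving $\chi_\Gamma(C)$ versus $\chi_\Gamma(C)^*$.

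Your Step 4 is not needed for the identity itself; it coincides with the paper's later restriction operator $\mathscr R_{\mathcal G}^{W}$ built from $W_d=\operatorname{SUM}_d(F_d\otimes I)$. Your Step 3 uses $k(\mathcal G)=|\widehat{\mathcal G}|$ to make explicit the surjectivity onto $\bigoplus_\Gamma\mathbb C\,|\Gamma\rangle|I_\Gamma\rangle$, which the paper asserts without comment.
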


\begin{proof}
Let \(\mathcal G\) be a finite group,
\(\widehat{\mathcal G}\) a complete set of inequivalent irreducible unitary representations, and
$
\mathscr F_{\mathcal G}:
\mathbb C[\mathcal G]
\longrightarrow
\bigoplus_{\Gamma\in\widehat{\mathcal G}}
V_\Gamma\otimes V_\Gamma^*
$ the nonabelian Fourier transform.
Applying the Fourier transform to a normalized conjugacy-class state,
\[
\begin{aligned}
\mathscr F_{\mathcal G}|C\rangle
&=
\frac{1}{\sqrt{|C|}}
\sum_{g\in C}\mathscr F_{\mathcal G}|g\rangle  \\
&=
\sum_{\Gamma\in\widehat{\mathcal G}}
\sqrt{\frac{d_\Gamma}{|\mathcal G||C|}}
\sum_{i,j}
\left(
\sum_{g\in C}\rho_\Gamma(g^{-1})
\right)_{ij}
|\Gamma,i,j\rangle .
\end{aligned}
\]

Since \(C^{-1}\) is a conjugacy class, the operator
\(\sum_{g\in C}\rho_\Gamma(g^{-1})\) commutes with every \(\rho_\Gamma(h)\). By Schur's lemma,
$
\sum_{g\in C}\rho_\Gamma(g^{-1})
=
\alpha_{\Gamma,C}I_{d_\Gamma}
$
for some scalar \(\alpha_{\Gamma,C}\). Taking traces,
$
\alpha_{\Gamma,C}
=
\frac{|C|\chi_\Gamma(C^{-1})}{d_\Gamma}
=
\frac{|C|\chi_\Gamma(C)^*}{d_\Gamma},
$
and therefore
\[
\sum_{g\in C}\rho_\Gamma(g^{-1})
=
\frac{|C|\chi_\Gamma(C)^*}{d_\Gamma}I_{d_\Gamma}.
\]
The identity matrix spans the unique invariant line in
\(V_\Gamma\otimes V_\Gamma^*\), namely
\(
\operatorname{vec}(I_{V_\Gamma})
=
\sum_i|i,i\rangle.
\)
Let
$
|I_\Gamma\rangle
=
\frac{1}{\sqrt{d_\Gamma}}
\operatorname{vec}(I_{V_\Gamma}).
$
Substituting,
\[
\begin{aligned}
\mathscr F_{\mathcal G}|C\rangle
&=
\sum_{\Gamma\in\widehat{\mathcal G}}
\sqrt{\frac{|C|}{|\mathcal G|d_\Gamma}}\,
\chi_\Gamma(C)^*
\sum_{i=1}^{d_\Gamma}|\Gamma,i,i\rangle\\
&=
\sum_{\Gamma\in\widehat{\mathcal G}}
\sqrt{\frac{|C|}{|\mathcal G|}}\,
\chi_\Gamma(C)^*
|\Gamma\rangle|I_\Gamma\rangle.
\end{aligned}
\]

Thus every conjugacy-class state is mapped into the invariant line of each irreducible block, and
$
\operatorname{Im}
\!\left(
\left.
\mathscr F_{\mathcal G}
\right|_{Z(\mathbb C[\mathcal G])}
\right)
=
\bigoplus_{\Gamma\in\widehat{\mathcal G}}
\mathbb C\,|\Gamma\rangle|I_\Gamma\rangle.
$
Under the canonical unitary identification
\(
|\Gamma\rangle|I_\Gamma\rangle
\mapsto
|\Gamma\rangle,
\)
the restriction acts by
\[
|C\rangle
\longmapsto
\sum_{\Gamma\in\widehat{\mathcal G}}
\sqrt{\frac{|C|}{|\mathcal G|}}\,
\chi_\Gamma(C)^*\,
|\Gamma\rangle,
\]
whose matrix is the normalized character table.
\end{proof}

\begin{remark}
The state
$
|I_\Gamma\rangle
=
d_\Gamma^{-1/2}\operatorname{vec}(I_{V_\Gamma})
$
is the unique (up to phase) normalized \((\rho_\Gamma\otimes\rho_\Gamma^*)\)-invariant vector in \(V_\Gamma\otimes V_\Gamma^*\). Under the Choi--Jamiołkowski correspondence
\cite{jamiolkowski1972linear, choi1975completely}, it represents the identity intertwiner on \(V_\Gamma\), and is the canonical maximally entangled state associated with the irrep $V_\Gamma$. Restricting the Fourier transform to the center removes every noncentral matrix direction, leaving this canonical invariant vector in each irreducible block. The entries of the normalized character table are the expansion coefficients of conjugacy-class states in the basis of these canonical invariant vectors.
\end{remark}

This character-basis reduction appears in Ref.~\cite[Appendix~D.1]{BermejoVegaZatloukal2015}. The hypergroup algebra of the abelian conjugacy-class hypergroup \(\overline{\mathcal G}\) is isomorphic to \(Z(\mathbb C[\mathcal G])\), so our formulation positions the character transform inside the non-abelian Fourier transform as a canonical restriction to the central subspace.
\section{Class Convolution}
Let
\(
f_1, f_2\in \mathrm{Cl}(\mathcal G)
\)
be class functions. Left convolution by \(f_1\) defines the action of a convolution operator $T_{f_1}$ by 
$
T_{f_1} f_2
=
f_1*f_2.
$
When \(\mathcal G\) is abelian, \(Z(\mathbb C[\mathcal G])
=\mathbb C[\mathcal G]\), and \(T_f\) is the usual translation-invariant group-circulant convolution operator. For cyclic groups, this is the familiar
circulant Toeplitz operator. For a general finite group, \(T_f\) is its central, conjugacy-invariant counterpart. Since the convolution of two class functions is again a class function, \(T_f\) restricts to an operator on \(Z(\mathbb C[\mathcal G])\). Let \(C_1,\ldots,C_d\) be the conjugacy classes of \(\mathcal G\). In the
normalized class basis, the matrix elements of \(T_f\) are
\begin{equation}
(T_f)_{ij}
=
\langle C_i|T_f|C_j\rangle
=
\frac{1}{\sqrt{|C_i||C_j|}}
\sum_{g\in C_i}\sum_{h\in C_j}f(gh^{-1}).
\end{equation}
Choosing the alternative convolution convention replaces \(f(gh^{-1})\) by
\(f(g^{-1}h)\); the two are related by the involution
\(f^\vee(g)=f(g^{-1})\).
Class convolution operators commute because
\(Z(\mathbb C[\mathcal G])\) is commutative,
$
T_{f_1}T_{f_2}
=
T_{f_1*f_2}
=
T_{f_2*f_1}
=
T_{f_2}T_{f_1}.
$
Under the involution
$
f^*(g)=\overline{f(g^{-1})},
$
one has
$
T_f^\dagger=T_{f^*}.
$
Also,
$
T_fT_f^\dagger=T_{f*f^*}=T_{f^**f}=T_f^\dagger T_f,
$
so every class convolution operator is \textit{normal}. It is Hermitian when
$
f(g^{-1})=\overline{f(g)}
$
for all \(g\in\mathcal G\), and real symmetric when \(f\) is real-valued and invariant under inversion.

The normalized irreducible characters form the common eigenbasis of these operators. To determine the eigenvalue, associate to \(f\) the central group-algebra element
$
z_f^\vee=\sum_{g\in\mathcal G}f(g)g^{-1}.
$
\[
\begin{aligned}
\rho_\Gamma(z_f^\vee)
=
\widehat f(\Gamma)I_{V_\Gamma},
\quad
\widehat f(\Gamma)
&=
\frac{1}{d_\Gamma}
\sum_{g\in\mathcal G}
f(g)\chi_\Gamma(g^{-1}) \\
&=
\frac{1}{d_\Gamma}
\sum_{i=1}^d
|C_i|\,f(C_i)\,
\overline{\chi_\Gamma(C_i)}.
\label{eq-eigenval}
\end{aligned}
\]
It follows that
$
T_f|\chi_\Gamma\rangle
=
\widehat f(\Gamma)|\chi_\Gamma\rangle.
$
Writing
$
D_{\widehat f}=\operatorname{diag}(\widehat f(\Gamma))_{\Gamma\in\widehat{\mathcal G}},
$
we obtain the following.

\begin{corollary}
The Discrete Character Transform simultaneously diagonalizes the entire algebra of class convolution operators,
\begin{equation}
U_{\mathrm{CT}}^{}
T_f
U_{\mathrm{CT}}^{-1}
=
D_{\widehat f}.
\label{eq-diag}
\end{equation}
\end{corollary}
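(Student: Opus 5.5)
The corollary should follow from two facts: the normalized characters $|\chi_\Gamma\rangle$ are an orthonormal eigenbasis of every $T_f$ with eigenvalue $\widehat f(\Gamma)$, and $U_{\mathrm{CT}}$ is exactly the unitary carrying that basis to the computational basis $\{|\Gamma\rangle\}$. I will work entirely inside $Z(\mathbb C[\mathcal G])$ with the normalized class states $|C\rangle$. These are orthonormal because distinct classes are disjoint. I will also use the fact, already available, that $T_f$ preserves the class algebra when $f$ is a class function.

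\textbf{Step 1 (unitarity and the image of $|\chi_\Gamma\rangle$).} The matrix of $U_{\mathrm{CT}}$ has entries $\sqrt{|C|/|\mathcal G|}\,\chi_\Gamma(C)^*$, with rows indexed by $\Gamma$ and columns by $C$. Its row orthonormality is the first orthogonality relation for characters under the class-weighted inner product. It is square because the number of irreducibles equals the number of conjugacy classes. Hence $U_{\mathrm{CT}}$ is unitary, and $U_{\mathrm{CT}}^{-1}=U_{\mathrm{CT}}^\dagger$. The same orthogonality gives $U_{\mathrm{CT}}|\chi_\Gamma\rangle=|\Gamma\rangle$ directly from the definition $|\chi_\Gamma\rangle=\sum_C\sqrt{|C|/|\mathcal G|}\,\chi_\Gamma(C)|C\rangle$. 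Alternatively, this can be read off from Theorem~\ref{thm:center-restriction}, since $U_{\mathrm{CT}}$ is the restriction of the unitary $\mathscr F_{\mathcal G}$ followed by an isometric identification, and is therefore an isometry between spaces of equal dimension $k(\mathcal G)$.

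\textbf{Step 2 (the eigenvalue equation).} I need to check that $T_f|\chi_\Gamma\rangle=\widehat f(\Gamma)|\chi_\Gamma\rangle$. Identify the vector $|\chi_\Gamma\rangle$ with the class function $g\mapsto\chi_\Gamma(g)/\sqrt{|\mathcal G|}$; this is consistent with the amplitude encoding of $|C\rangle$. The character $\chi_\Gamma$ corresponds, up to the scalar $d_\Gamma/|\mathcal G|$, to the central idempotent $e_\Gamma$. The product $z e_\Gamma$ equals $\omega_\Gamma(z)\,e_\Gamma$ for any central $z$, where $\omega_\Gamma(z)$ is the scalar by which $z$ acts in $\rho_\Gamma$, by Schur's lemma. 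Concretely, I will compute $(f*\chi_\Gamma)(h)=\sum_g f(g)\chi_\Gamma(g^{-1}h)$ using the character identity
\[
\sum_{g\in C}\chi_\Gamma(g^{-1}h)=\frac{|C|\,\chi_\Gamma(C^{-1})\,\chi_\Gamma(h)}{d_\Gamma}.
\]
This identity is the trace of the Schur scalar computed in the proof of Theorem~\ref{thm:center-restriction}. Summing over classes weighted by $f(C)$ yields $\widehat f(\Gamma)\chi_\Gamma(h)$. The $d_\Gamma^{-1}$ and complex-conjugate factors must be reconciled with the displayed formula for $\widehat f(\Gamma)$ and with the $gh^{-1}$ versus $g^{-1}h$ convention. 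This bookkeeping is the step I expect to be the main obstacle. If a mismatch appears, I will fix it by replacing $f$ by $f^\vee$, using the involution noted in the text, rather than by changing the statement.

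\textbf{Step 3 (conclusion).} For each $\Gamma$,
\[
U_{\mathrm{CT}}T_fU_{\mathrm{CT}}^{-1}|\Gamma\rangle=U_{\mathrm{CT}}T_f|\chi_\Gamma\rangle=\widehat f(\Gamma)|\Gamma\rangle.
\]
Since $\{|\Gamma\rangle\}$ is a basis, the conjugated operator equals $D_{\widehat f}$. The basis does not depend on $f$, so the diagonalization is simultaneous for all class convolution operators. This is consistent with the commutativity and normality of the $T_f$ established above.
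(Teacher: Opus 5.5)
Your proposal is correct and follows essentially the paper's argument. Schur's lemma makes the central element $z_f^\vee=\sum_g f(g)g^{-1}$ act as $\widehat f(\Gamma)I_{d_\Gamma}$ in $\rho_\Gamma$, which gives $T_f|\chi_\Gamma\rangle=\widehat f(\Gamma)|\chi_\Gamma\rangle$, and the unitary normalized character table sends $|\chi_\Gamma\rangle$ to $|\Gamma\rangle$; you verify that last step explicitly, while the paper takes it for granted.

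The bookkeeping you flagged closes with no mismatch, so no $f\mapsto f^\vee$ substitution is needed: your trace identity summed against $f(C)$ gives exactly $\frac{1}{d_\Gamma}\sum_i |C_i| f(C_i)\overline{\chi_\Gamma(C_i)}$, because $\chi_\Gamma(C^{-1})=\overline{\chi_\Gamma(C)}$. Only your idempotent heuristic is off by a dual, since the element with coefficients $\chi_\Gamma$ is a multiple of the idempotent of $\Gamma^*$; this is why the paper phrases the eigenvalue through $z_f^\vee$ rather than $z_f$.
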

Rearranging,
$
T_f
=
U_{\mathrm{CT}}^{-1}
D_{\widehat f}
U_{\mathrm{CT}}^{}.
$
Thus the regular representation of the class convolution algebra is equivalent to the algebra of diagonal operators on the irreducible-character labels,
\[
\{T_f:f\in \mathrm{Cl}(\mathcal G)\}
\cong
\bigoplus_{\Gamma\in\widehat{\mathcal G}}\mathbb C.
\]

\noindent
\textbf{\textit{Class convolution over $S_4$.}} As a running example, we focus on $S_4$. Let the conjugacy classes $C_i$ be ordered as
$
C_1=(1^4),\;
C_2=(2,1^2),\;
C_3=(2^2),\;
C_4=(3,1),\;
C_5=(4),
$
with respective sizes
$
|C_1|=1,\; |C_2|=6,\; |C_3|=3,\;
|C_4|=8,\; |C_5|=6.
$
In this ordering, the normalized character table defines the
$S_4$ quantum character transform below.
\begin{table}[H]
\centering
\caption{Normalized character table of $S_4$.}
\label{tab:qct-s4}
\renewcommand{\arraystretch}{1.35}
\begin{tabular}{c|ccccc}
$\mathrm{QCT}_{S_4}$
& $(1^4)$ & $(2,1^2)$ & $(2^2)$ & $(3,1)$ & $(4)$ \\
\noalign{\vskip 3pt}
\hline
\noalign{\vskip 3pt}
$[4]$
& $\dfrac{1}{\sqrt{24}}$
& $\dfrac{1}{2}$
& $\dfrac{1}{2\sqrt{2}}$
& $\dfrac{1}{\sqrt{3}}$
& $\dfrac{1}{2}$ \\
$[1^4]$
& $\dfrac{1}{\sqrt{24}}$
& $-\dfrac{1}{2}$
& $\dfrac{1}{2\sqrt{2}}$
& $\dfrac{1}{\sqrt{3}}$
& $-\dfrac{1}{2}$ \\
$[3,1]$
& $\dfrac{3}{\sqrt{24}}$
& $\dfrac{1}{2}$
& $-\dfrac{1}{2\sqrt{2}}$
& $0$
& $-\dfrac{1}{2}$ \\
$[2,1,1]$
& $\dfrac{3}{\sqrt{24}}$
& $-\dfrac{1}{2}$
& $-\dfrac{1}{2\sqrt{2}}$
& $0$
& $\dfrac{1}{2}$ \\
$[2,2]$
& $\dfrac{2}{\sqrt{24}}$
& $0$
& $\dfrac{1}{\sqrt{2}}$
& $-\dfrac{1}{\sqrt{3}}$
& $0$
\end{tabular}
\end{table}

The rows are indexed by the irreducible representations of $S_4$, labeled by the partitions $[4]$, $[1^4]$, $[3,1]$, $[2,1,1]$, and $[2,2]$. For a class function $f$, write
$
f=\sum_{i=1}^{d} f(C_i)\mathds{1}_{C_i}.
$
Ergo,
\begin{equation}
T_f=\sum_{i=1}^{d} f(C_i)T_{C_i},
\qquad
T_{C_i}:=T_{\mathds{1}_{C_i}}.    
\end{equation}
The operators $T_{C_i}$ form a basis for the algebra of class convolution operators. In the normalized conjugacy-class basis, for $S_4$ these operators are
\begin{equation}
\begin{aligned}
T_{(1^4)} &= I_5,\\[-0.1em]
T_{(2,1^2)} &=
\begin{psmallmatrix}
0&\sqrt6&0&0&0\\
\sqrt6&0&\sqrt2&2\sqrt3&0\\
0&\sqrt2&0&0&2\sqrt2\\
0&2\sqrt3&0&0&2\sqrt3\\
0&0&2\sqrt2&2\sqrt3&0
\end{psmallmatrix},\\[0.2em]
T_{(2^2)} &=
\begin{psmallmatrix}
0&0&\sqrt3&0&0\\
0&1&0&0&2\\
\sqrt3&0&2&0&0\\
0&0&0&3&0\\
0&2&0&0&1
\end{psmallmatrix},\\[0.2em]
T_{(3,1)} &=
\begin{psmallmatrix}
0&0&0&2\sqrt2&0\\
0&4&0&0&4\\
0&0&0&2\sqrt6&0\\
2\sqrt2&0&2\sqrt6&4&0\\
0&4&0&0&4
\end{psmallmatrix},\\[0.2em]
T_{(4)} &=
\begin{psmallmatrix}
0&0&0&0&\sqrt6\\
0&0&2\sqrt2&2\sqrt3&0\\
0&2\sqrt2&0&0&\sqrt2\\
0&2\sqrt3&0&0&2\sqrt3\\
\sqrt6&0&\sqrt2&2\sqrt3&0
\end{psmallmatrix}.
\end{aligned}
\label{s4-classconv-matrices}
\end{equation}
These matrices commute and form the regular matrix representation of \(Z(\mathbb C[S_4])\). By \eqref{eq-diag}, the Character Transform simultaneously diagonalizes the basis operators,
$
U_{\mathrm{CT}}^{(S_4)}
T_{C_i}
\left(U_{\mathrm{CT}}^{(S_4)}\right)^\dagger
=
D_{C_i}
$,
and by the eigenvalue formula \eqref{eq-eigenval},
\[
\begin{aligned}
D_{(1^4)}
&=\operatorname{diag}(1,1,1,1,1),\\
D_{(2,1^2)}
&=\operatorname{diag}(6,-6,2,-2,0),\\
D_{(2^2)}
&=\operatorname{diag}(3,3,-1,-1,3),\\
D_{(3,1)}
&=\operatorname{diag}(8,8,0,0,-4),\\
D_{(4)}
&=\operatorname{diag}(6,-6,-2,2,0).
\end{aligned}
\]
Every class convolution operator
$
T_f
$
is diagonalized by the same transform where
\[
\begin{aligned}
D_{\widehat f}&=
\operatorname{diag}\!\left(
\widehat f([4]),
\widehat f([1^4]),
\widehat f([3,1]),
\widehat f([2,1,1]),
\widehat f([2,2])
\right)\\
&=
\operatorname{diag}\;
\begin{psmallmatrix}
f(C_1)+6f(C_2)+3f(C_3)+8f(C_4)+6f(C_5)\\
f(C_1)-6f(C_2)+3f(C_3)+8f(C_4)-6f(C_5)\\
f(C_1)+2f(C_2)-f(C_3)-2f(C_5)\\
f(C_1)-2f(C_2)-f(C_3)+2f(C_5)\\
f(C_1)+3f(C_3)-4f(C_4)
\end{psmallmatrix}.   
\end{aligned}
\]
Conversely, reconstructing from the character-domain spectrum,
\[
\begin{aligned}
T_f
&=
\left(U_{\mathrm{QCT}}^{(S_4)}\right)^\dagger
D_{\widehat f}
U_{\mathrm{QCT}}^{(S_4)}
\\[0.3em]
&=
\begin{pmatrix}
a&b&c&d&e\\
b&r&g&h&i\\
c&g&j&k&\ell\\
d&h&k&m&h\\
e&i&\ell&h&r
\end{pmatrix},
\end{aligned}
\]
\[
\begin{aligned}
a&=f(C_1),&
r&=f(C_1)+f(C_3)+4f(C_4),\\
b&=\sqrt6\,f(C_2),&
g&=\sqrt2\!\left(f(C_2)+2f(C_5)\right),\\
c&=\sqrt3\,f(C_3),&
h&=2\sqrt3\!\left(f(C_2)+f(C_5)\right),\\
d&=2\sqrt2\,f(C_4),&
i&=2f(C_3)+4f(C_4),\\
e&=\sqrt6\,f(C_5),&
j&=f(C_1)+2f(C_3),\\
k&=2\sqrt6\,f(C_4),&
\ell&=\sqrt2\!\left(2f(C_2)+f(C_5)\right),\\
&&m&=f(C_1)+3f(C_3)+4f(C_4).
\end{aligned}
\]
The reconstructed class operator is symmetric and exhibits a pronounced Hankel-like repeated-entry structure. In fact, the apparent \(5\times5\) density is highly redundant. Since the first row already determines the class-function coefficients,
\[
\begin{aligned}
f(C_1)&=a, &
f(C_2)&=\frac{b}{\sqrt6}, &
f(C_3)&=\frac{c}{\sqrt3},\\
f(C_4)&=\frac{d}{2\sqrt2}, &
f(C_5)&=\frac{e}{\sqrt6},
\end{aligned}
\]
every remaining entry is fixed by \(a,b,c,d,e\).
\[
\begin{aligned}
r&=a+\frac{c}{\sqrt3}+\sqrt2\,d,
&
g&=\frac{b+2e}{\sqrt3},
&
h&=\sqrt2\,(b+e),
\\
i&=\frac{2c}{\sqrt3}+\sqrt2\,d,
&
k&=\sqrt3\,d,
&
\ell&=\frac{2b+e}{\sqrt3}.
\end{aligned}
\]
The entire \(25\)-entry operator can be written in terms of its five-entry first row. The class-convolution operators form a five-dimensional complex subspace of the 15-dimensional complex vector space of symmetric \(5\times5\) matrices. Indeed, eliminating the class-function coefficients in favor of \(a,b,c,d,e\) shows
\[
T_f=
\begin{psmallmatrix}
a
&
b
&
c
&
d
&
e
\\[0.2em]
b
&
a+\frac{c}{\sqrt3}+\sqrt2\,d
&
\frac{b+2e}{\sqrt3}
&
\sqrt2\,(b+e)
&
\frac{2c}{\sqrt3}+\sqrt2\,d
\\[0.2em]
c
&
\frac{b+2e}{\sqrt3}
&
a+\frac{2c}{\sqrt3}
&
\sqrt3\,d
&
\frac{2b+e}{\sqrt3}
\\[0.2em]
d
&
\sqrt2\,(b+e)
&
\sqrt3\,d
&
a+\sqrt3\,c+\sqrt2\,d
&
\sqrt2\,(b+e)
\\[0.2em]
e
&
\frac{2c}{\sqrt3}+\sqrt2\,d
&
\frac{2b+e}{\sqrt3}
&
\sqrt2\,(b+e)
&
a+\frac{c}{\sqrt3}+\sqrt2\,d
\end{psmallmatrix}.
\]
The entry relations and Hankel-like patterns visible in this \(S_4\) example are from the constrained class algebra, as opposed to the translation-invariant abelian group algebras in which convolution operators take Toeplitz form. While this specific reflection pattern depends on the group, and the ordering of conjugacy classes, the strong global redundancy that it exhibits persists for general class convolution operators.

\subsection{Implementation Methods}

Quantum class convolution is now of interest as a symmetry-reduced form of group convolution, but the underlying class algebra also has a long history in classical representation theory and physical applications. For symmetric groups, class-sum products have been studied in connection with systems of identical particles~\cite{Katriel1998ClassSum}, while Harrison and Scott used the standard representation of the \(S_3\) class algebra to construct a charged-lepton mass matrix whose spectral structure is determined by \(S_3\) character data~\cite{HarrisonScott2003}. Earlier this decade, quantum algorithms for the more general problem of group convolution and cross-correlation were developed by Castelazo et al.~\cite{Castelazo2022GroupConvolution}. Naively, convolution is computed through the full nonabelian Fourier transform, where it becomes multiplication within irreducible Fourier blocks. For a class function, Schur's lemma makes the Fourier blocks scalar, so the
internal matrix indices are unnecessary for the convolution spectrum. The diagonalization above provides two complementary implementations specialized to this central case. One may construct \(T_f\) directly as a linear combination of class-sum operators, or materialize it spectrally by conjugating a diagonal operator with the character transform. For Hermitian \(T_f\), the same spectral representation computes the evolution \(e^{-itT_f}\).

\subsubsection{Direct Class-Sum Block Encoding}

A direct implementation is natural when the class function is specified explicitly in the conjugacy-class basis, particularly when only a small number of conjugacy classes have nonzero coefficients, as occurs in sparse class functions and in many weighted Cayley-graph, combinatorial-Laplacian, and symmetry-adapted averaging operators. One may directly realize the convolution through a coherent superposition of left-regular group actions, yielding a block encoding whose complexity depends on the required state preparation and controlled group actions. This does not require an implementation of the QCT.

Let \(L_g\) denote the left-regular action
$
(L_gf_2)(h)=f_2(g^{-1}h).
$
For the convolution convention adopted above,
$
T_{C_i}
=
\sum_{g\in C_i}L_g,
$
and
\[
T_{f}
=
\sum_{i=1}^d
f(C_i)
\sum_{g\in C_i}L_g
=
\sum_{g\in\mathcal G}f(g)L_g.
\]
Thus, a class convolution operator is a linear combination of unitary left-regular group actions with coefficients constant on conjugacy classes. Define the normalization
\[
\alpha_{f}
=
\sum_{i=1}^d
|C_i|\,|f(C_i)|
=
\sum_{g\in\mathcal G}|f(g)|.
\]
Assuming coherent access to the class weights and phases of $f$, efficient preparation of normalized conjugacy-class states, and efficient controlled group multiplication, the following operations can be implemented coherently:
\begin{equation}
\begin{gathered}
\operatorname{PREP}_{f}|0\rangle
=
\frac{1}{\sqrt{\alpha_{f}}}
\sum_{i=1}^d
\sqrt{|C_i|\,|f(C_i)|}\,
|i\rangle
\frac{1}{\sqrt{|C_i|}}
\sum_{g\in C_i}|g\rangle,
\\
\operatorname{SELECT}_{f}:
|i\rangle|g\rangle|\psi\rangle
\longmapsto
\frac{f(C_i)}{|f(C_i)|}\,
|i\rangle|g\rangle L_g|\psi\rangle .
\end{gathered}    
\end{equation}
The phase may be chosen arbitrarily whenever \(f(C_i)=0\). Then
$
U_{f}
=
\left(\operatorname{PREP}_{f}^{\dagger}\otimes I\right)
\operatorname{SELECT}_{f}
\left(\operatorname{PREP}_{f}\otimes I\right)
$
satisfies
$
(\langle 0|\otimes I) U_f (|0\rangle\otimes I)
=
\frac{T_{f}}{\alpha_{f}},
$
so \(U_{f}\) is an \((\alpha_{f},a,0)\)-block encoding of \(T_{f}\), where \(a\) is the number of ancilla qubits needed by the class and group-element registers ~\cite{Gilyen2019QSVT}.

\begin{figure}[H]
\centering
\begin{quantikz}[row sep=0.35cm,column sep=0.55cm]
\lstick{$\ket{0}$}
    & \gate{\mathrm{PREP}_f}
    & \gate[wires=2]{\mathrm{SELECT}_{f}}
    & \gate{\mathrm{PREP}_f^{\dagger}}
    & \meter{}
\\
\lstick{$\ket{\psi}$}
    & \qw
    & \qw
    & \qw
    & \qw
\end{quantikz}
\caption{Direct class-sum block encoding of a class convolution operator. The
preparation oracle $\mathrm{PREP}_f$ prepares a weighted superposition over
conjugacy classes and their elements, while $\mathrm{SELECT}_f$ coherently
applies the corresponding left-regular action with the class-dependent phase.
Uncomputing the ancilla yields a block encoding of $T_f/\alpha_f$.}
\label{fig:lcu_classsum}
\end{figure}
\begin{proposition}[Direct class-sum block encoding]
Let \(0\neq f\in \mathrm{Cl}(\mathcal G)\). \(T_{f}/\alpha_{f}\) has an exact block encoding with complexity
\[
2\operatorname{cost}(\operatorname{PREP}_{f})
+
\operatorname{cost}(\operatorname{SELECT}_{f}).
\]
\end{proposition}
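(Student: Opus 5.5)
The plan is to verify the LCU identity directly, by computing the top-left block of $U_f$ on an arbitrary input $|\psi\rangle$. The cost count then follows by inspection.

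\textbf{Step 1: the state after $\operatorname{PREP}_f$ and $\operatorname{SELECT}_f$.} Write $f(C_i)=|f(C_i)|e^{i\theta_i}$, with $\theta_i$ arbitrary when $f(C_i)=0$. Applying $\operatorname{PREP}_f\otimes I$ to $|0\rangle|\psi\rangle$ gives
\[
\frac{1}{\sqrt{\alpha_f}}\sum_{i=1}^d\sum_{g\in C_i}\sqrt{|f(C_i)|}\,|i\rangle|g\rangle|\psi\rangle,
\]
since the factors $\sqrt{|C_i|}$ in the class weight and in the class-state normalization cancel. Applying $\operatorname{SELECT}_f$ then yields
\[
\frac{1}{\sqrt{\alpha_f}}\sum_{i}\sum_{g\in C_i}\sqrt{|f(C_i)|}\,e^{i\theta_i}\,|i\rangle|g\rangle L_g|\psi\rangle.
\]

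\textbf{Step 2: project onto the ancilla $|0\rangle$.} Projecting with $(\langle 0|\operatorname{PREP}_f^\dagger)\otimes I$ is the same as taking the inner product of the ancilla with $\operatorname{PREP}_f|0\rangle$. Each term then picks up the conjugate amplitude $\sqrt{|f(C_i)|}/\sqrt{\alpha_f}$. The result is
\[
\frac{1}{\alpha_f}\sum_i |f(C_i)|e^{i\theta_i}\sum_{g\in C_i}L_g|\psi\rangle=\frac{1}{\alpha_f}\sum_{g}f(g)L_g|\psi\rangle=\frac{T_f}{\alpha_f}|\psi\rangle.
\]
The last equality uses the decomposition $T_f=\sum_g f(g)L_g$ derived just above the statement. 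Because $|\psi\rangle$ is arbitrary, this gives $(\langle 0|\otimes I)U_f(|0\rangle\otimes I)=T_f/\alpha_f$ exactly, with no approximation error.

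\textbf{Step 3: well-definedness and cost.} The hypothesis $f\neq 0$ guarantees $\alpha_f>0$, so $\operatorname{PREP}_f$ prepares a normalized state. $\operatorname{SELECT}_f$ is unitary because it is a controlled product of the unitary $L_g$ and a unit-modulus phase. Hence $U_f$ is a product of three unitaries: one call to $\operatorname{PREP}_f$, one to $\operatorname{SELECT}_f$, and one to $\operatorname{PREP}_f^\dagger$. The inverse circuit has the same cost as the forward one, so the total is $2\operatorname{cost}(\operatorname{PREP}_f)+\operatorname{cost}(\operatorname{SELECT}_f)$.

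\textbf{Expected obstacle: conjugation conventions.} The main point needing care is bookkeeping of phases and conjugation. The phase $e^{i\theta_i}$ must sit in $\operatorname{SELECT}_f$ and not in $\operatorname{PREP}_f$. If it sat in $\operatorname{PREP}_f$, the $\operatorname{PREP}_f^\dagger$ step would contribute $\overline{f(C_i)}$ and produce $|f|^2$-type weights instead of $f$. One must also confirm that the convention $L_g$ with $(L_gf_2)(h)=f_2(g^{-1}h)$ reproduces $T_f f_2=f*f_2$. This holds because $\sum_g f(g)f_2(g^{-1}h)$ is exactly \eqref{eq-conv}.
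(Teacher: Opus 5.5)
Your proof is correct and follows the paper's construction exactly: it is the standard LCU computation of the top-left block of $(\operatorname{PREP}_f^\dagger\otimes I)\operatorname{SELECT}_f(\operatorname{PREP}_f\otimes I)$, with the phase carried by $\operatorname{SELECT}_f$ and the identity $T_f=\sum_g f(g)L_g$. One small correction to your closing aside: if the phases were placed in $\operatorname{PREP}_f$, they would cancel against $\operatorname{PREP}_f^\dagger$ and give $T_{|f|}/\alpha_f$ (weights $|f(C_i)|$, not $|f|^2$-type weights), but this does not affect your argument.
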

Because each \(L_g\) acts on the full group algebra and \(T_f\) preserves its center, the same block encoding restricts to \(Z(\mathbb C[\mathcal G])\). This is the central specialization of the linear combination of unitaries method for general group convolution in Ref.~\cite{Castelazo2022GroupConvolution}. The block encoding is efficient whenever the required class-state preparation
and controlled left-regular action are also of low gate complexity.

Consider the conjugacy class \(C_{(3,1)}\) in \(S_4\),
\[
\{
(123),(132),(124),(142),
(134),(143),(234),(243)
\}.
\]
The associated class-convolution operator on the group-element basis is
$
T_{(3,1)}
=
\sum_{g\in C_{(3,1)}}L_g$
where
\;
$L_g\ket{x}=\ket{gx}.
$
\(T_{(3,1)}\) is an equal-weight linear combination of eight permutation unitaries, and because \(|C_{(3,1)}|=8\), a three-qubit selector register can be prepared using only Hadamard gates:
$
H^{\otimes 3}\ket{000}
=
\frac{1}{\sqrt8}
\sum_{r=0}^{7}\ket r.
$

\begin{figure}[H]
\centering
\begin{quantikz}
\lstick{\(\ket{000}\)}
    & \gate{H^{\otimes 3}}
    & \ctrl{1}
    & \gate{H^{\otimes 3}}
    & \meter{}
\\
\lstick{\(\ket{\psi}\)}
    & \qw
    & \gate{\operatorname{SELECT}_{(3,1)}}
    & \qw
    & \qw
\end{quantikz}
\caption{Direct class-sum block encoding of \(T_{(3,1)}\). The selector register is prepared in a uniform superposition over the eight elements of \(C_{(3,1)}\), and \(\operatorname{SELECT}_{(3,1)}\) coherently applies the corresponding left-regular actions. Uncomputing the selector yields \(T_{(3,1)}/8\) in the \(\ket{000}\) block.}
\label{fig:lcu}
\end{figure}

Choose an ordering \(C_{(3,1)}=\{g_0,\ldots,g_7\}\) and define
\[
\operatorname{SELECT}_{(3,1)}
=
\sum_{r=0}^{7}
\ket r\!\bra r\otimes L_{g_r}.
\]
The LCU circuit has upper-left block
$
\left(
\bra{000}H^{\otimes3}\otimes I
\right)
\operatorname{SELECT}_{(3,1)}
\left(
H^{\otimes3}\ket{000}\otimes I
\right)
=
\frac{1}{8}T_{(3,1)}.
$
Hence the circuit is an exact block encoding of the class sum with normalization factor
$
\alpha_{(3,1)}=8.
$
On the normalized conjugacy-class states
$
\ket{C_i},
$
the same operator has matrix entries
\[
\left(T_{(3,1)}\right)_{kj}
=
p_{(3,1),j}^{k}
\sqrt{\frac{|C_k|}{|C_j|}},
\]
where \(p_{(3,1),j}^{k}\) are the class-algebra structure constants (also intersection numbers) defined by
\[
K_{(3,1)}K_j
=
\sum_k p_{(3,1),j}^{k}K_k,
\qquad
K_j=\sum_{x\in C_j}x.
\]
For instance,
\[
T_{(3,1)}\ket{C_1}
=
\sum_{g\in C_{(3,1)}}\ket g
=
\sqrt8\,\ket{C_{(3,1)}}
=
2\sqrt2\,\ket{C_{(3,1)}},
\]
which is the entry \(2\sqrt2\) in the first column of the compressed class-basis matrix of \(T_{(3,1)}\).

Under the isomorphism \(S_4\cong T_d\), the class \(C_{(3,1)}\) is the eight proper threefold rotations of tetrahedral symmetry~\cite{cotton1990chemical}. Under a molecular representation \(\rho\), the class sum operator \(\sum_{g\in C_{(3,1)}}\rho(g)\) represents the equal-weight sum of the \(C_{3,1}\) symmetry operations, and applied to a molecular state, its normalized form acts as a symmetry-adapted filter according to the state's response under threefold rotational symmetry. In the character basis,
$
\frac{1}{8}
U_{\mathrm{QCT}}^{(S_4)}
T_{(3,1)}
\left(U_{\mathrm{QCT}}^{(S_4)}\right)^\dagger
=
\operatorname{diag}
\left(
1,1,0,0,-\frac12
\right),
$
so the sectors with eigenvalue \(1\) are preserved, those with eigenvalue \(0\) are annihilated, and the remaining sector is multiplied by \(-1/2\). 

\subsubsection{Spectral Implementation through the QCT}

Access to the quantum character transform provides a second implementation strategy in which a class convolution operator is specified by its eigenvalues instead of its expansion in group actions. In the irreducible-character basis, \(T_f\) is diagonal, with entries given by the spectral coefficients \(\widehat f(\Gamma)\), which is useful for operations defined spectrally. The spectral representation can also improve the block-encoding normalization, since it depends on \(\max_\Gamma|\widehat f(\Gamma)|\) instead of the \(\ell_1\) weight of the class-sum decomposition.

Let
\[
\beta_f
\geq
\max_{\Gamma\in\widehat{\mathcal G}}
\left|\widehat f(\Gamma)\right|
=
\|T_f\|.
\]
Let \(\widehat f_\epsilon(\Gamma)\) be a finite-precision
encoding of \(\widehat f(\Gamma)\), chosen such that
$
\max_{\Gamma\in\widehat{\mathcal G}}
\left|
\widehat f_\epsilon(\Gamma)-\widehat f(\Gamma)
\right|
\leq
\epsilon\beta_f,
\qquad
\left|\widehat f_\epsilon(\Gamma)\right|\leq\beta_f.
$
Define the reversible spectral oracle
$
O_{\widehat f}:
|\Gamma\rangle|y\rangle
\longmapsto
|\Gamma\rangle
|y\oplus\widehat f_\epsilon(\Gamma)\rangle,
$
so that
$
O_{\widehat f}|\Gamma\rangle|0\rangle
=
|\Gamma\rangle|\widehat f_\epsilon(\Gamma)\rangle.
$
Using reversible arithmetic, condition an ancilla
amplitude-and-phase operation on the encoded value so that
$
(\langle 0|_{\mathrm{anc}}\otimes I)
U_{D_{\widehat f_\epsilon}}
(|0\rangle_{\mathrm{anc}}\otimes I)
=
\frac{D_{\widehat f_\epsilon}}{\beta_f},
$
where the ancilla contains the spectral-value and signal registers.
Conjugating this block encoding by the QCT,
$
U_f^{\mathrm{spec}}
=
\left(
I\otimes
\left(U_{\mathrm{QCT}}^{(\mathcal G)}\right)^\dagger
\right)
U_{D_{\widehat f_\epsilon}}
\left(
I\otimes U_{\mathrm{QCT}}^{(\mathcal G)}
\right),
$
which satisfies
$
\left\|
(\langle 0|_{\mathrm{anc}}\otimes I)
U_f^{\mathrm{spec}}
(|0\rangle_{\mathrm{anc}}\otimes I)
-
\frac{T_f}{\beta_f}
\right\|
\leq\epsilon.
$
\(U_f^{\mathrm{spec}}\) is an \(\epsilon\)-approximate block encoding of \(T_f\) with spectral normalization \(\beta_f\).

\begin{figure}[H]
\centering
\begin{quantikz}[row sep=0.35cm,column sep=0.32cm]
\lstick{$|\psi\rangle$}
& \gate{\mathrm{QCT}}
& \gate[2]{O_{\widehat f}}
& \qw
& \gate[2]{O_{\widehat f}^{\dagger}}
& \gate{\mathrm{QCT}^{-1}}
& \qw \\
\lstick{$|0\rangle_{\mathrm{val}}$}
& \qw
& \qw
& \ctrl{1}
& \qw
& \qw
& \qw \\
\lstick{$|0\rangle_{\mathrm{sig}}$}
& \qw
& \qw
& \gate{\mathrm{Amp}}
& \qw
& \qw
& \qw
\end{quantikz}
\caption{Spectral implementation of a class convolution operator. The QCT maps the input into the irreducible-character basis, where the oracle $O_{\widehat f}$ computes the spectral coefficients $\widehat f(\Gamma)$. The gate $\mathrm{Amp}$ performs a controlled amplitude encoding using the normalized spectral values $\widehat f(\Gamma)/\beta_f$. After uncomputing the spectral register and applying \(\mathrm{QCT}^{-1}\), the circuit is an \(\epsilon\)-approximate block encoding of \(T_f/\beta_f\) on the input register, relative to the \(\lvert 0\rangle_{\mathrm{val}}\lvert 0\rangle_{\mathrm{sig}}\) ancilla state.
}
\label{fig:spectral_qct}
\end{figure}

\begin{proposition}[Spectral block encoding through the QCT]
Let \(f\in \mathrm{Cl}(\mathcal{G})\), and
$
\beta_f
\geq
\max_{\Gamma\in\widehat{\mathcal G}}
|\widehat f(\Gamma)|.
$
Then \(T_f/\beta_f\) has an \(\epsilon\)-approximate block encoding with complexity
\[
2\operatorname{cost}(\mathrm{QCT})
+
2\operatorname{cost}(O_{\widehat f})
+
\operatorname{polylog}(1/\epsilon),
\]
where \(\epsilon\) is the precision of the diagonal block encoding.
\end{proposition}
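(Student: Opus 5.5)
The construction is the one already described before the statement: conjugate a block encoding of the diagonal operator $D_{\widehat f_\epsilon}/\beta_f$ by the QCT. The proof verifies three things in order: the block identity of the diagonal piece, the effect of conjugation, and the error and cost accounting.

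\textbf{Step 1: the diagonal block encoding.} Start from $|\Gamma\rangle|0\rangle_{\mathrm{val}}|0\rangle_{\mathrm{sig}}$. Apply $O_{\widehat f}$ to write $\widehat f_\epsilon(\Gamma)$ into the value register. Then apply a value-controlled single-qubit gate $\mathrm{Amp}$ on the signal qubit with $\langle 0|\mathrm{Amp}_y|0\rangle = y/\beta_f$. This is legitimate because $|\widehat f_\epsilon(\Gamma)|\le\beta_f$: write $y/\beta_f=re^{i\phi}$ with $r\le 1$, and realize it as a phase times $R_y(2\arccos r)$, whose angles are computed reversibly from $y$. Finally apply $O_{\widehat f}^\dagger$. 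Because $O_{\widehat f}$ is diagonal in $\Gamma$ and $\mathrm{Amp}$ never alters the value register, the value register returns to $|0\rangle$ on the $\langle 0|_{\mathrm{sig}}$ branch. Projecting onto $\langle 0|_{\mathrm{val}}\langle 0|_{\mathrm{sig}}$ then leaves the scalar $\widehat f_\epsilon(\Gamma)/\beta_f$ on $|\Gamma\rangle$. Summing over $\Gamma$ by linearity gives exactly $D_{\widehat f_\epsilon}/\beta_f$ in the ancilla-zero block.

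\textbf{Step 2: conjugation by the QCT.} The QCT acts only on the system register, so it commutes with the ancilla projection. Hence the block of $U_f^{\mathrm{spec}}$ is $U_{\mathrm{QCT}}^\dagger D_{\widehat f_\epsilon}U_{\mathrm{QCT}}/\beta_f$. By the Corollary (eq.~\eqref{eq-diag}), $T_f/\beta_f=U_{\mathrm{QCT}}^\dagger D_{\widehat f}U_{\mathrm{QCT}}/\beta_f$, where unitarity of the QCT comes from Theorem~\ref{thm:center-restriction}. The difference between the two is $U^\dagger (D_{\widehat f_\epsilon}-D_{\widehat f})U/\beta_f$. Unitary invariance of the operator norm, together with the fact that the norm of a diagonal matrix is its largest entry, bounds this by $\max_\Gamma|\widehat f_\epsilon(\Gamma)-\widehat f(\Gamma)|/\beta_f\le\epsilon$.

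\textbf{Step 3: cost.} The gate sequence is QCT, $O_{\widehat f}$, $\mathrm{Amp}$, $O_{\widehat f}^\dagger$, $\mathrm{QCT}^\dagger$. That is two QCT calls and two oracle calls. $\mathrm{Amp}$ needs reversible computation of the rotation angles from a $b$-bit value. Taking $b=O(\log(1/\epsilon))$ fixed-point bits, the arithmetic for $\arccos$ and the phase costs $\mathrm{poly}(b)=\operatorname{polylog}(1/\epsilon)$ gates, plus $b$ controlled rotations.

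\textbf{Main obstacle.} The delicate point is precision bookkeeping. The rotation angles are themselves computed to finite precision, so the realized amplitude is not exactly $\widehat f_\epsilon(\Gamma)/\beta_f$. I would absorb this error into the definition of $\widehat f_\epsilon$: split the budget as $\epsilon/2$ for discretizing the spectral values and $\epsilon/2$ for the angle arithmetic. Any amplitude that would exceed $1$ in modulus gets clipped, which preserves $|\widehat f_\epsilon|\le\beta_f$. Step 2 then applies verbatim with the realized values in place of $\widehat f_\epsilon$.
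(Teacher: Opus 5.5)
Your proposal is correct and follows the paper's own construction. The paper also block-encodes $D_{\widehat f_\epsilon}/\beta_f$ by computing $\widehat f_\epsilon(\Gamma)$ with $O_{\widehat f}$, applying a value-controlled amplitude-and-phase gate, and uncomputing; it then conjugates by the QCT and obtains the $\epsilon$ bound from $\max_\Gamma|\widehat f_\epsilon(\Gamma)-\widehat f(\Gamma)|\le\epsilon\beta_f$ together with unitary invariance of the operator norm. Your realization of $\mathrm{Amp}$ as a phase times $R_y(2\arccos r)$ and your $\epsilon/2$ precision split fill in the ``reversible arithmetic'' step that the paper leaves implicit (make sure the computed angle register is uncomputed too, so that every ancilla returns to $|0\rangle$).
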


Choosing the optimal spectral normalization,
$
\beta_f
=
\max_{\Gamma\in\widehat{\mathcal G}}
|\widehat f(\Gamma)|
=
\|T_f\|,
$
\[
\beta_f
\leq
\sum_i |C_i|\,|f(C_i)|
=
\alpha_f.
\]
The inequality may be strict, so the spectral implementation can have a smaller normalization than the direct LCU construction.

The same spectral construction applies to functions of \(T_f\). For a computable scalar function \(\phi\),
\[
\phi(T_f)
=
U_{\mathrm{QCT}}^\dagger
\operatorname{diag}\!\left(
\phi(\widehat f(\Gamma))
\right)
U_{\mathrm{QCT}}.
\]
If 
$
\phi(x)=e^{-itx},
$
\[
e^{-itT_f}
=
U_{\mathrm{QCT}}^\dagger
\operatorname{diag}\!\left(
e^{-it\widehat f(\Gamma)}
\right)
U_{\mathrm{QCT}}.
\]
Hamiltonian simulation can be performed by replacing the amplitude-encoding operation in Fig.~\ref{fig:spectral_qct} with the diagonal phase operation
$
U_{\mathrm{phase}}(t)|\Gamma\rangle
=
e^{-it\widehat f(\Gamma)}|\Gamma\rangle.
$

\begin{proposition}[Hamiltonian simulation through the QCT]
Let \(T_f\) be Hermitian. Given coherent access to the spectral coefficients through \(O_{\widehat f}\), the evolution \(e^{-itT_f}\) can be implemented to precision \(\epsilon\) with complexity
\[
2\operatorname{cost}(\mathrm{QCT})
+
2\operatorname{cost}(O_{\widehat f})
+
\operatorname{cost}(U_{\mathrm{rot}}(t,\epsilon)).
\]
\end{proposition}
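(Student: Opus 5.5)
The proof follows the spectral factorization of the Corollary, $T_f = U_{\mathrm{QCT}}^\dagger D_{\widehat f} U_{\mathrm{QCT}}$. Since $T_f$ is Hermitian, every $\widehat f(\Gamma)$ is real. Functional calculus then gives
\[
e^{-itT_f} = U_{\mathrm{QCT}}^\dagger\, \operatorname{diag}\!\big(e^{-it\widehat f(\Gamma)}\big)\, U_{\mathrm{QCT}}.
\]
So it suffices to implement the diagonal phase operator $U_{\mathrm{phase}}(t)$ on the irrep register to precision $\epsilon$, and then conjugate it by the QCT. This conjugation accounts for the term $2\operatorname{cost}(\mathrm{QCT})$ and introduces no additional error, because conjugation by a unitary preserves operator-norm distance.

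To implement $U_{\mathrm{phase}}(t)$ I would use phase kickback, following the circuit of Fig.~\ref{fig:spectral_qct} with the amplitude gate replaced.

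\emph{Step 1 (load).} Apply $O_{\widehat f}$ to $|\Gamma\rangle|0\rangle_{\mathrm{val}}$, producing $|\Gamma\rangle|\widehat f_{\epsilon'}(\Gamma)\rangle$, where $\widehat f_{\epsilon'}$ is a fixed-point encoding of the real spectral value.

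\emph{Step 2 (rotate).} Apply $U_{\mathrm{rot}}(t,\epsilon)$, a diagonal operator on the value register mapping $|y\rangle \mapsto e^{-ity}|y\rangle$ up to error. For a binary register this is a product of single-qubit phase gates $\mathrm{diag}(1, e^{-it2^{k}})$ on each bit, with the sign bit handled the same way. Its cost is the term $\operatorname{cost}(U_{\mathrm{rot}}(t,\epsilon))$.

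\emph{Step 3 (unload).} Apply $O_{\widehat f}^\dagger$ to return the value register to $|0\rangle$.

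Because the value register is in a computational basis state conditioned on $|\Gamma\rangle$, it factors out cleanly. The net action on the irrep register is then $|\Gamma\rangle \mapsto e^{-it\widehat f_{\epsilon'}(\Gamma)}|\Gamma\rangle$, which accounts for the $2\operatorname{cost}(O_{\widehat f})$ term. The whole circuit is unitary on the system register, so no block-encoding normalization or postselection is involved.

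\emph{Error analysis.} This is the step I expect to need the most care, because the error depends on $t$. The implemented diagonal differs from the target by
\[
\max_\Gamma \big|e^{-it\widehat f_{\epsilon'}(\Gamma)} - e^{-it\widehat f(\Gamma)}\big| \le |t|\,\max_\Gamma|\widehat f_{\epsilon'}(\Gamma)-\widehat f(\Gamma)|,
\]
plus whatever synthesis error $U_{\mathrm{rot}}$ adds. I would set the value-register precision to $\epsilon/(2|t|)$, which requires $O(\log(|t|\beta_f/\epsilon))$ bits. I would give $U_{\mathrm{rot}}$ a budget of $\epsilon/2$ for gate synthesis. The triangle inequality then bounds the total error by $\epsilon$.

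This precision requirement is why the rotation cost is written as depending on both $t$ and $\epsilon$. The oracle must also be invoked at this finer precision; I would treat that as absorbed into $\operatorname{cost}(O_{\widehat f})$, consistent with how the spectral block-encoding proposition treated finite-precision encodings. Finally, I would note that because the exact eigenphases are applied directly, there is no Trotter-type or QSVT-polynomial dependence on $t\|T_f\|$ apart from this bit-precision overhead.
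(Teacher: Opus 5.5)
Your proposal is correct and follows the paper's own route: the functional-calculus identity $e^{-itT_f}=U_{\mathrm{QCT}}^\dagger\operatorname{diag}\bigl(e^{-it\widehat f(\Gamma)}\bigr)U_{\mathrm{QCT}}$, realized by the circuit of Fig.~\ref{fig:spectral_qct} with the amplitude gate replaced by a phase rotation placed between $O_{\widehat f}$ and $O_{\widehat f}^\dagger$. Your phase-kickback construction and error budget (value precision $\epsilon/(2|t|)$ plus $\epsilon/2$ for rotation synthesis, assuming an exact QCT) spell out what the paper leaves implicit in the term $\operatorname{cost}(U_{\mathrm{rot}}(t,\epsilon))$.
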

The final term accounts for the reversible arithmetic and phase-rotation circuit applying \(U_{\mathrm{phase}}(t)\) 
to precision \(\epsilon\) once \(\widehat f(\Gamma)\) has been computed. Quantum phase estimation is not required; the QCT supplies the eigenbasis, while \(O_{\widehat f}\) supplies the eigenvalues. Other efficiently computable unit-modulus spectral functions can be implemented by replacing the phase \(e^{-itx}\), while bounded spectral functions that are not unit modulus require an appropriate diagonal block encoding or a polynomial-transformation procedure such as quantum singular value transformation~\cite{Gilyen2019QSVT}.

\section{Algorithm by Restriction}

The Restriction Theorem facilitates a universal implementation strategy for the QCT. By Theorem~\ref{thm:center-restriction}, the non-abelian Fourier transform maps the center into the invariant lines
\[
\mathscr F_{\mathcal G}\ket{\psi}
=
\sum_{\Gamma\in\widehat{\mathcal G}}
\widehat{\psi}(\Gamma)\,
\ket{\Gamma}\ket{I_\Gamma},
\]
where
$
\ket{I_\Gamma}
=
\frac{1}{\sqrt{d_\Gamma}}
\operatorname{vec}(I_{V_\Gamma})
=
\frac{1}{\sqrt{d_\Gamma}}
\sum_{i=1}^{d_\Gamma}\ket{i,i}
$
is the canonical maximally entangled state in Theorem~\ref{thm:center-restriction}. Thus, on the Fourier image of the center, the matrix-index registers contain no independent information about $\ket{\psi}$; conditioned on $\Gamma$, they occupy the known state $\ket{I_\Gamma}$. A restriction algorithm coherently unprepares these states, leaving only the irreducible-character amplitudes.

\begin{definition}[Restriction operator]
Define the \textit{restriction operator}
\[
\mathscr R_{\mathcal G}
:=
\sum_{\Gamma\in\widehat{\mathcal G}}
\ket{\Gamma}\!\bra{\Gamma}
\otimes
\bra{I_\Gamma}.
\]
In the irreducible-block decomposition of the Fourier space,
$
\mathscr R_{\mathcal G}
=
\operatorname{diag}
\bigl(
\bra{I_{\Gamma_1}},
\ldots,
\bra{I_{\Gamma_d}}
\bigr),
$
and its action on the invariant subspace is
\[
\mathscr R_{\mathcal G}
\bigl(\ket{\Gamma}\ket{I_\Gamma}\bigr)
=
\ket{\Gamma}.
\]
\end{definition}

The restriction operator is a partial isometry from the full Fourier
space onto the character space. In particular,
$
\mathscr R_{\mathcal G}
\mathscr R_{\mathcal G}^{\dagger}
=
I_{|\widehat{\mathcal G}|},
$
while
$
\mathscr R_{\mathcal G}^{\dagger}
\mathscr R_{\mathcal G}
=
\sum_{\Gamma\in\widehat{\mathcal G}}
\ket{\Gamma}\!\bra{\Gamma}
\otimes
\ket{I_\Gamma}\!\bra{I_\Gamma},
$
which is the orthogonal projector onto
$
\operatorname{Im}\!\left(
\left.
\mathscr F_{\mathcal G}
\right|_{Z(\mathbb C[\mathcal G])}
\right)
=
\bigoplus_{\Gamma\in\widehat{\mathcal G}}
\mathbb C\,\ket{\Gamma}\ket{I_\Gamma}.
$
On this subspace, $\mathscr R_{\mathcal G}$ is the canonical unitary
identification with the character space established by the Restriction
Theorem. Consequently,
\[
U_{\rm QCT}
=
\mathscr R_{\mathcal G}
\mathscr F_{\mathcal G}
\big|_{Z(\mathbb C[\mathcal G])}.
\]
To implement this identification coherently, for each positive integer $d$ we need a maximally-entangled-state preparation unitary $W_d$ satisfying
\begin{equation}
W_d\ket{0,0}
=
\ket{I_d}
:=
\frac{1}{\sqrt d}
\sum_{i=0}^{d-1}\ket{i,i}.   
\end{equation}
In a logical $d$-dimensional encoding, such a unitary can be compiled by applying the Fourier transform $F_d$ to the first matrix-index register, followed by the controlled-addition operation
$
\operatorname{SUM}_d\ket{i,j}
=
\ket{i,j+i \bmod d},
$
so that
\begin{equation}
W_d=\operatorname{SUM}_d(F_d\otimes I).
\end{equation}
Then,
$
W_d^\dagger\ket{I_d}
=
(F_d^\dagger\otimes I)\operatorname{SUM}_d^\dagger\ket{I_d}
=
\ket{0,0}.
$
For $d=2$, $F_2=H$ and
$\operatorname{SUM}_2=\operatorname{CNOT}$, and one can use
$
W_2=\operatorname{CNOT}(H\otimes I),
$
the standard Bell-state preparation circuit. Taking the matrix-index registers to have a common ambient dimension, extend each \(W_{d_\Gamma}\) arbitrarily to a unitary on these registers. The resulting controlled unitary is
\begin{equation}
\mathscr R_{\mathcal G}^{W}
:=
\sum_{\Gamma\in\widehat{\mathcal G}}
\ket{\Gamma}\!\bra{\Gamma}
\otimes
W_{d_\Gamma}^{\dagger}.
\end{equation}
It satisfies
$
\mathscr R_{\mathcal G}^{W}
\bigl(\ket{\Gamma}\ket{I_\Gamma}\bigr)
=
\ket{\Gamma}\ket{0,0},
$
and hence, for every
$\ket{\psi}\in Z(\mathbb C[\mathcal G])$,
$
\mathscr R_{\mathcal G}^{W}
\mathscr F_{\mathcal G}\ket{\psi}
=
\bigl(U_{\rm QCT}\ket{\psi}\bigr)\ket{0,0}.
$
The matrix-index registers are returned to the fixed state $\ket{0,0}$ and can be discarded without loss of coherence.

\begin{algorithm}
\centering
\textbf{Restriction Algorithm for $\mathrm{QCT}_{\mathcal G}$}
\par\medskip
\begin{algorithmic}[1]
\State {\bf Input:} State
\(\ket{\psi}\in Z(\mathbb C[\mathcal G])\), encoded in the full group register as a superposition of normalized conjugacy-class states.
\State Apply the group QFT:
$
\ket{\phi}=\mathscr F_{\mathcal G}\ket{\psi}.
$
\State Apply the controlled restriction:
\[
\mathscr R_{\mathcal G}^{W}\ket{\phi}
=
(U_{\rm QCT}\ket{\psi})\ket{0,0}.
\]
\State Discard the matrix-index registers
$\ket{0,0}$.
\State {\bf Output:}
$U_{\rm QCT}\ket{\psi}$ in the irreducible-character basis.
\end{algorithmic}
\end{algorithm}

The additional cost of the restriction step is that of the controlled
inverses $W_{d_\Gamma}^{\dagger}$, so
\[
\operatorname{cost}({\rm QCT}_{\mathcal G})
=
\operatorname{cost}({\rm QFT}_{\mathcal G})
+
\operatorname{cost}(\mathscr R_{\mathcal G}^{W}).
\]
The restriction overhead is the cost of implementing
\(W_{d_\Gamma}^{\dagger}\) coherently from the irrep label \(\Gamma\). Because \(d_\Gamma^2\leq|\mathcal G|\), the joint matrix-index space has dimension at most \(|\mathcal G|\), so the restriction step uses
\(\mathcal O(\log|\mathcal G|)\) matrix-index qubits under a binary encoding. For families of finite groups in which \(d_\Gamma\) can be efficiently computed from \(\Gamma\), the restriction step reduces to
coherently controlled generalized Bell-state unpreparation and can be implemented with complexity polynomial in \(\log|\mathcal G|\) and
\(\log(1/\epsilon)\). This holds for finite-group families with an efficient QFT and access to representation data, such as \(D_n\), whose
irreducible representations have dimension one or two, and \(S_n\), where irreducible representations \(\Gamma\) are indexed by partitions \(\lambda\vdash n\) and \(d_\Gamma\) is efficiently computable from the
hook-length formula.

While the restriction algorithm operates on the full group space, its output has dimension $k(\mathcal G)=|\widehat{\mathcal G}|$, with one label per irreducible representation. A full group register uses $\lceil\log_2|\mathcal G|\rceil$ qubits, while the character register requires only $\lceil\log_2 k(\mathcal G)\rceil$. This reduction provides a natural space-complexity parameter for comparing a \textit{class-native QCT} with the restriction of a full group Fourier transform. When $k(\mathcal G)$ grows substantially more slowly than $|\mathcal G|$, a QCT algorithm that works completely in the class basis acts on an asymptotically smaller state space, meaning smaller group-specific transform circuits may be possible.

\begin{remark}[Central compression]
For a finite group \(\mathcal G\), the \textit{commuting probability} is
\[
\operatorname{cp}(\mathcal G)
=
\frac{|\{(g,h)\in\mathcal G^2:gh=hg\}|}{|\mathcal G|^2}
=
\frac{k(\mathcal G)}{|\mathcal G|},
\]
where \(k(\mathcal G)=|\operatorname{Conj}(\mathcal G)|\). Since
$
\dim Z(\mathbb C[\mathcal G])=k(\mathcal G),
\dim\mathbb C[\mathcal G]=|\mathcal G|,
$
it follows that
\[
\operatorname{cp}(\mathcal G)
=
\frac{\dim Z(\mathbb C[\mathcal G])}
{\dim\mathbb C[\mathcal G]}.
\]
The ratio between the QCT and QFT state-space dimensions is the commuting probability of the group, with
$
\frac{|\mathcal G|}{k(\mathcal G)}
=
\operatorname{cp}(\mathcal G)^{-1}
$
being the state-space compression factor for non-abelian groups.
\end{remark}

For metacyclic groups such as \(D_n\), however, this dimensional reduction does not produce an asymptotic reduction in register size. The dihedral group \(D_n\) has order \(2n\), while
\[
k(D_n)
=
\begin{cases}
(n+3)/2, & n \text{ odd},\\
n/2+3, & n \text{ even},
\end{cases}
\]
so \(k(D_n)=n/2+ \mathcal O(1)\).
$
\log_2|D_n|-\log_2 k(D_n)\to 2,
$
and the class representation saves asymptotically only two qubits. Efficient quantum Fourier transforms for metacyclic groups are already known~\cite{hoyer1997efficient} with polylogarithmic gate complexity for \(D_n\), so a constant reduction in register size shows no immediate indication of an asymptotically faster class-native character transform, although the existence of such a quantum circuit has not been formally ruled out. The classical complexity follows similarly. Individual irreducible characters of \(D_n\) have explicit closed forms and can be evaluated in \(\mathcal O(1)\) arithmetic operations, while materializing the normalized character matrix requires writing \(\Theta(n^2)\) entries. A classical restriction algorithm instead embeds the class representation into the group representation, applies a fast Fourier transform, and compresses the result back to the character representation. The embedding sends a class coefficient \(a_i\) to the constant value \(a_i/\sqrt{|C_i|}\) on every element of \(C_i\), where expanding the class representation over \(D_n\) requires \(\Theta(|D_n|)=\Theta(n)\) operations, and compression is likewise \(\Theta(n)\). Fast Fourier transforms for dihedral groups run in time \(\mathcal O(|D_n|\log|D_n|)= \mathcal O(n\log n)\)~\cite{clausen1989fast}, so the embedding and compression costs do not change the \(\mathcal O(n\log n)\) complexity of the classical restriction algorithm.

The situation is different for the symmetric group \(S_n\). Its order is \(n!\), but \(k(S_n)=p(n)\), where
\(p(n)\sim \frac{1}{4n\sqrt{3}}\exp(\pi\sqrt{2n/3})\) is the partition number. The group register is of size
\(\log|S_n|=\Theta(n\log n)\), whereas \(\log k(S_n)=\Theta(\sqrt n)\), a nontrivial asymptotic gap. A class-native algorithm can provide a polynomial compression in register size. In 2016, Kawano and Sekigawa reported an \(\mathcal O(n^3\log n)\) algorithm for the \(S_n\) quantum Fourier transform~\cite{KawanoSekigawa2016}, and a recent reanalysis by Bruinsma, Grinko, and Ozols improves the gate bound to \(\widetilde{\mathcal O}(n^3)\)~\cite{BruinsmaGrinkoOzols2026}. Shortly after the initial dissemination of this manuscript, Rathore and Zhou shared the first known algorithm for the QCT over \(S_n\) without passing through the full group QFT, using \(\widetilde{\mathcal O}(n^{5/2})\) gates and \(\mathcal O(n)\) qubits ~\cite{RathoreZhou2026}. While the restriction algorithm inherits the complexity of the QFT and the new Hamiltonian-based method achieves a square-root-factor improvement in gate complexity while using \(\mathcal O(n)\) qubits, a native QCT acting entirely on this compressed class space could provide a further polynomial reduction in register size. Whether that space reduction can also yield a lower asymptotic gate complexity remains the more important question, with potential consequences for class convolution, symmetry-adapted spectral algorithms, and conjugacy-invariant formulations of graph isomorphism.

\section{Hidden Subgroups and Central Shadows}

One of the primary motivations discussed for this transform was its application to the hidden subgroup problem over non-abelian groups, which turns out to be quite limited. This result is not surprising, as the restriction algorithm provides a rigorous way to understand the information retained by the QCT in hidden-subgroup problems. In the most general hidden subgroup problem, an oracle is constant on left cosets of an unknown subgroup \(H\leq\mathcal G\) and factors through the left-coset space \(\mathcal G/H\) ~\cite{Hoyer2002,EttingerHoyerKnill2004}, while the data relevant to the QCT is conjugacy invariant and factors through the set of conjugacy classes \(\mathcal G/\!/\mathcal G\). It is obvious that the character transform cannot reproduce the full information available under nonabelian Fourier sampling, but which information about \(H\) survives restriction to the center \(Z(\mathbb C[\mathcal G])\)? We call the invariant the \emph{central shadow} of \(H\), and it presents both a positive and a negative finding. It identifies the subgroup information accessible to class-convolution operators, while showing that distinct nonconjugate subgroups with the same central shadow cannot be distinguished by any protocol whose dependence on the hidden subgroup factors entirely through the class algebra.

For each irreducible representation \(\rho_\Gamma:\mathcal G\rightarrow U(V_\Gamma)\), the normalized subgroup average determines the operator
\begin{equation}
\Pi_H^{(\Gamma)}
=
\frac{1}{|H|}
\sum_{h\in H}\rho_\Gamma(h).
\end{equation}
Since averaging over \(H\) is idempotent and self-adjoint in a unitary representation, \(\Pi_H^{(\Gamma)}\) is the orthogonal projector onto the \(H\)-invariant subspace ~\cite{Serre1977}
\[
V_\Gamma^H
=
\{v\in V_\Gamma:\rho_\Gamma(h)v=v\ \text{for all }h\in H\}.
\]
The full nonabelian Fourier description of the subgroup average retains, within each irreducible sector \(\Gamma\), the matrix-valued projector \(\Pi_H^{(\Gamma)}\). Its rank is
\[
\operatorname{rank}\Pi_H^{(\Gamma)}
=
\operatorname{Tr}\Pi_H^{(\Gamma)}
=
\dim V_\Gamma^H.
\]
Projecting the subgroup average onto \(Z(\mathbb C[\mathcal G])\) removes this internal matrix structure while retaining its conjugacy-invariant content. 

\begin{definition}[Central shadow of a subgroup]
Let \(H\leq\mathcal G\). The \emph{central shadow} of \(H\) is the
conjugacy average of its normalized subgroup sum,
\begin{align*}
z_H
&:=
\frac{1}{|\mathcal G|}
\sum_{x\in\mathcal G}
x
\left(
\frac{1}{|H|}\sum_{h\in H}h
\right)
x^{-1}\\
&=
\sum_{C\in\operatorname{Conj}(\mathcal G)}
\frac{|H\cap C|}{|H|\,|C|}\,K_C,
\qquad
K_C:=\sum_{g\in C}g.
\end{align*}
\end{definition}

In the class basis, \(z_H\) records the distribution of \(H\)
across the conjugacy classes of \(\mathcal G\). It forgets which
individual elements occur within a class and how those elements
are jointly embedded as a subgroup.

\begin{theorem}[Character spectrum of a subgroup central shadow]
\label{thm:central-shadow}
For every \(H\leq\mathcal G\) and
\(\Gamma\in\widehat{\mathcal G}\), define
$
m_\Gamma(H)
:=
\frac{1}{|H|}
\sum_{h\in H}\chi_\Gamma(h).
$
Then
\[
\begin{aligned}
m_\Gamma(H)
&=\dim V_\Gamma^H\\
&=\left\langle
\operatorname{Res}^{\mathcal G}_H\chi_\Gamma,1_H
\right\rangle_H
=\left\langle
\chi_\Gamma,
\operatorname{Ind}^{\mathcal G}_H1_H
\right\rangle_{\mathcal G}.
\end{aligned}
\]
where \(1_H\) denotes the trivial character of \(H\). Moreover,
\[
\rho_\Gamma(z_H)
=
\frac{m_\Gamma(H)}{d_\Gamma}I_{d_\Gamma},
\]
and hence, under the spectral convention for class convolution,
\[
\widehat z_H(\Gamma)
=
\frac{m_\Gamma(H)}{d_\Gamma}
=
\frac{\dim V_\Gamma^H}{d_\Gamma}.
\]
The multiplicity profile $\bigl(m_\Gamma(H)\bigr)_{\Gamma\in\widehat{\mathcal G}}
$
is the irreducible-multiplicity vector of
$
\mathbb C[\mathcal G/H]
\cong
\operatorname{Ind}^{\mathcal G}_H1_H.
$
\end{theorem}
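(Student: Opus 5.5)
The plan is to prove the four equalities for \(m_\Gamma(H)\) first, and then read off the central-shadow statements from them using Schur's lemma, as in Theorem~\ref{thm:center-restriction}.

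\textbf{Multiplicity identities.} Set \(P=\Pi_H^{(\Gamma)}=\frac{1}{|H|}\sum_{h\in H}\rho_\Gamma(h)\). It is the orthogonal projector onto \(V_\Gamma^H\), since \(P^2=P\), \(P^\dagger=P\), and \(\operatorname{Im}P=V_\Gamma^H\). Taking traces gives \(m_\Gamma(H)=\operatorname{Tr}P=\operatorname{rank}P=\dim V_\Gamma^H\). The restriction pairing equals \(\frac{1}{|H|}\sum_{h}\chi_\Gamma(h)\overline{1_H(h)}=m_\Gamma(H)\) directly. Frobenius reciprocity then gives \(\langle\operatorname{Res}\chi_\Gamma,1_H\rangle_H=\langle\chi_\Gamma,\operatorname{Ind}1_H\rangle_{\mathcal G}\).

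\textbf{Quasi-regular representation.} Since \(\mathbb C[\mathcal G/H]\), the permutation representation on left cosets, is \(\operatorname{Ind}^{\mathcal G}_H1_H\), the vector \((m_\Gamma(H))_\Gamma\) is its multiplicity vector. This is also the only step where the inner product needs care. In this paper the class-algebra inner product is linear in the first slot and conjugate-linear in the second. The pairing is real here, since it is a nonnegative integer, so no conjugation issue arises.

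\textbf{Shadow eigenvalue.} \(z_H\) is central because it is a conjugation average, or equivalently because it is a combination of class sums. By Schur's lemma, \(\rho_\Gamma(z_H)=\lambda I_{d_\Gamma}\). Taking traces, and using that \(\chi_\Gamma(xhx^{-1})=\chi_\Gamma(h)\) for each \(x\), gives
\[
d_\Gamma\lambda=\frac{1}{|\mathcal G|}\sum_x\frac{1}{|H|}\sum_h\chi_\Gamma(h)=m_\Gamma(H),
\]
so \(\lambda=m_\Gamma(H)/d_\Gamma\). As a cross-check, the class-sum expansion gives \(\operatorname{Tr}\rho_\Gamma(K_C)=|C|\chi_\Gamma(C)\). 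The trace is then \(\sum_C\frac{|H\cap C|}{|H|}\chi_\Gamma(C)=m_\Gamma(H)\), which is consistent. I would also verify the stated class-basis formula for \(z_H\) by counting: \(xhx^{-1}\) runs over the class of \(h\) exactly \(|Z_h|=|\mathcal G|/|C|\) times.

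\textbf{Spectral convention.} It remains to match this with \(\widehat z_H(\Gamma)\) under the convention defined earlier, where \(\widehat f(\Gamma)\) is the eigenvalue of \(\rho_\Gamma(z_f^\vee)\) with \(z_f^\vee=\sum f(g)g^{-1}\). Here \(z_H\) corresponds to the coefficient function \(f(g)=\frac{|H\cap C_g|}{|H||C_g|}\). Two facts make the inversion harmless. First, \(H\) is closed under inverses. Second, \(|H\cap C^{-1}|=|H\cap C|\). Together these give \(f^\vee=f\), so \(z_f^\vee=z_f=z_H\) and \(\widehat z_H(\Gamma)=m_\Gamma(H)/d_\Gamma\). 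Getting this convention bookkeeping right is the only delicate point.
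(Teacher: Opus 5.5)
Your proposal is correct and follows the route the paper sketches around the theorem; the paper has no separate proof, but the argument is there. You take the trace of the projector $\Pi_H^{(\Gamma)}$ for the dimension count, use Frobenius reciprocity for the induced-character form, and apply Schur's lemma to the conjugation average, which is exactly the paper's centralization map $\Pi_H^{(\Gamma)}\mapsto \frac{\operatorname{Tr}\Pi_H^{(\Gamma)}}{d_\Gamma}I_{d_\Gamma}$. Your check that the shadow's coefficient function satisfies $f^\vee=f$, because $H=H^{-1}$ and $|H\cap C^{-1}|=|H\cap C|$, makes explicit the spectral-convention step the paper leaves unstated.
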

This also makes clear what is lost under the restriction \(\mathscr R_{\mathcal G}\). Before restriction, the Fourier block associated with \(\Gamma\) contains the projector \(\Pi_H^{(\Gamma)}\), whereas its central component is captured only by the scalar \(\operatorname{Tr}\Pi_H^{(\Gamma)}\). At the level of the \(\Gamma\)-block, centralization acts as
\[
\Pi_H^{(\Gamma)}
\longmapsto
\frac{\operatorname{Tr}\Pi_H^{(\Gamma)}}{d_\Gamma}I_{d_\Gamma}
=
\frac{m_\Gamma(H)}{d_\Gamma}I_{d_\Gamma}.
\]
The character transform retains the dimension, but not the orientation of the invariant subspace \(V_\Gamma^H\) within \(V_\Gamma\). That missing information is encoded in the internal matrix indices present in \(\mathscr F_{\mathcal G}\) and removed by restriction to the center.

\subsection{Weak Fourier Sampling}

The same invariant appears in weak Fourier sampling. For a left coset state
$
|gH\rangle
=
\frac{1}{\sqrt{|H|}}
\sum_{h\in H}|gh\rangle,
$
application of the nonabelian Fourier transform
\(\mathscr F_{\mathcal G}\), followed by measurement of only the
irreducible-representation label \(\Gamma\), gives
\[
\Pr_H(\Gamma)
=
\frac{|H|d_\Gamma}{|\mathcal G|}
\operatorname{rank}\Pi_H^{(\Gamma)}
=
\frac{|H|d_\Gamma}{|\mathcal G|}
\dim V_\Gamma^H.
\]
Thus weak Fourier sampling depends on \(H\) only through the multiplicity profile
\[
m_H
=
\bigl(m_\Gamma(H)\bigr)_{\Gamma\in\widehat{\mathcal G}}
=
\bigl(\dim V_\Gamma^H\bigr)_{\Gamma\in\widehat{\mathcal G}}.
\]
Equivalently,
$
\Pr_H(\Gamma)
=
\frac{|H|d_\Gamma^2}{|\mathcal G|}
\widehat z_H(\Gamma).
$
The distinction from strong Fourier sampling is that weak sampling retains only the irreducible-sector weights, whereas measurements of the internal Fourier indices may additionally probe the matrix structure of \(\Pi_H^{(\Gamma)}\) removed by \(\mathscr R_{\mathcal G}\).

The class-intersection profile, central shadow, and multiplicity profile are three descriptions of the same conjugacy-invariant information. For two subgroups \(H_1,H_2\leq\mathcal G\),
\[
z_{H_1}=z_{H_2}
\Longleftrightarrow
\frac{|H_1\cap C|}{|H_1|}
=
\frac{|H_2\cap C|}{|H_2|}
\quad
\forall C\in\operatorname{Conj}(\mathcal G),
\]
and, since the QCT is invertible on the center, equivalently
\[
\frac{1}{|H_1|}
\sum_{h\in H_1}\chi_\Gamma(h)
=
\frac{1}{|H_2|}
\sum_{g\in H_2}\chi_\Gamma(g)
\qquad
\forall\Gamma\in\widehat{\mathcal G}.
\]
Taking \(C=\{e\}\) in the class-profile equality gives
$
|H_1|=|H_2|,
$
so equality of central shadows is equivalent to
\[
|H_1\cap C|
=
|H_2\cap C|
\qquad
\forall C\in\operatorname{Conj}(\mathcal G).
\]
This is the classical relation of \emph{almost conjugacy}, or Gassmann equivalence ~\cite{Gassmann1926,Perlis1977,Sunada1985}. Equivalently,
$
\operatorname{Ind}^{\mathcal G}_{H_1}1_{H_1}
\cong
\operatorname{Ind}^{\mathcal G}_{H_2}1_{H_2},
$
and hence
$
\dim V_\Gamma^{H_1}
=
\dim V_\Gamma^{H_2}
$
for every \(\Gamma\in\widehat{\mathcal G}\) by Frobenius reciprocity~\cite{Serre1977,Perlis1977,Sunada1985}. Thus
\[
z_{H_1}=z_{H_2}
\;\Longleftrightarrow\;
H_1\ \text{and}\ H_2\ \text{are Gassmann equivalent}.
\]

\begin{corollary}[Central-shadow indistinguishability]
\label{cor:central-shadow-indistinguishability}
Let \(H_1,H_2\leq\mathcal G\) satisfy
$
z_{H_1}=z_{H_2}.
$
Then their character spectra coincide, and every class-convolution operator has identical spectral response on the two subgroup shadows. More generally, any procedure whose dependence on the hidden subgroup factors entirely through \(Z(\mathbb C[\mathcal G])\) cannot distinguish \(H_1\) from \(H_2\).
\end{corollary}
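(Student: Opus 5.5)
The plan is to prove the corollary in three layers, using Theorem~\ref{thm:central-shadow} and the diagonalization identity \eqref{eq-diag}.

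\emph{Character spectra coincide.} By Theorem~\ref{thm:central-shadow}, $\widehat z_H(\Gamma)=m_\Gamma(H)/d_\Gamma$. Since $z_{H_1}=z_{H_2}$ as elements of $Z(\mathbb C[\mathcal G])$, applying the QCT, which is unitary on the center by Theorem~\ref{thm:center-restriction}, gives $\widehat z_{H_1}(\Gamma)=\widehat z_{H_2}(\Gamma)$ for every $\Gamma$. Hence $m_\Gamma(H_1)=m_\Gamma(H_2)$. This is really only a restatement of the equivalence chain established just before the corollary, so I would cite it rather than rederive it.

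\emph{Class-convolution response.} For any $f\in\mathrm{Cl}(\mathcal G)$, I will compute $T_f$ applied to each shadow and read the result off in the character basis using \eqref{eq-diag}. Concretely, $U_{\mathrm{CT}}T_f z_{H_i}=D_{\widehat f}U_{\mathrm{CT}}z_{H_i}$. The right-hand sides agree because the vectors $U_{\mathrm{CT}}z_{H_i}$ agree. It follows that all outputs, overlaps $\langle \psi|T_f|z_{H_i}\rangle$, norms, and outputs of spectral functions $\phi(T_f)$ coincide. Since $z_{H_1}=z_{H_2}$ literally, this is immediate: any linear or nonlinear map applied to equal inputs yields equal outputs.

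\emph{The general statement.} The main step is formalizing ``a procedure whose dependence on $H$ factors entirely through $Z(\mathbb C[\mathcal G])$.'' I would define such a procedure as one whose output distribution (or final state) is $\mathcal P(H)=\Phi(\pi_Z(\text{$H$-dependent input}))$ for a fixed map $\Phi$. Here $\pi_Z$ is the orthogonal projection of $\mathbb C[\mathcal G]$ onto the center, that is, the conjugation average. The essential observation is that $\pi_Z$ applied to the normalized subgroup sum is exactly $z_H$ by definition. Hence $\mathcal P(H_1)=\Phi(z_{H_1})=\Phi(z_{H_2})=\mathcal P(H_2)$, so the output distributions are identical. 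The success probability of any test distinguishing $H_1$ from $H_2$ is therefore at most $1/2$, i.e.\ zero total-variation advantage.

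The point I expect to need care is clarifying that the coset states themselves are not central; the hypothesis restricts attention to procedures that see $H$ only through this central projection. As a sanity check, I would note that weak Fourier sampling is such a procedure, since $\Pr_H(\Gamma)=\frac{|H|d_\Gamma^2}{|\mathcal G|}\widehat z_H(\Gamma)$ and $|H_1|=|H_2|$. It is therefore blind to Gassmann-equivalent pairs.
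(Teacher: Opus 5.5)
Your proposal is correct and follows essentially the same route as the paper. The paper gives no separate proof: it treats the corollary as an immediate consequence of Theorem~\ref{thm:central-shadow} and the preceding equivalence chain (equal shadows $\Leftrightarrow$ equal class-intersection profiles $\Leftrightarrow$ equal character averages). It reads ``factoring through $Z(\mathbb C[\mathcal G])$'' as dependence on $H$ only through $z_H$, which is exactly how you formalize it, and under that reading the general claim is tautological.

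One minor simplification: the forward direction does not need unitarity or invertibility of the QCT. Since $z_{H_1}=z_{H_2}$, you already have $\rho_\Gamma(z_{H_1})=\rho_\Gamma(z_{H_2})$ for every $\Gamma$. The paper uses invertibility only for the converse.
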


This limitation cannot be overcome by changing the class function, combining multiple class-convolution operators, or selecting such operators adaptively. Optimizing a central probe can change how the surviving information is accessed, but cannot recover information removed by projection onto \(Z(\mathbb C[\mathcal G])\). Nonconjugate Gassmann-equivalent subgroups give intrinsic central-shadow collisions, not artifacts of a particular circuit or measurement prescription.

\subsection{Hidden Subgroup Conjugacy Problem}
A problem especially well matched to the conjugacy invariance of the central shadow is the \emph{hidden subgroup conjugacy problem} (HSCP). In the HSCP, an oracle hides a subgroup \(H\leq\mathcal G\) as in the ordinary hidden subgroup problem, but the objective is only to know its conjugacy class
\[
[H]_{\mathcal G}
=
\{gHg^{-1}:g\in\mathcal G\},
\]
instead of the particular hidden subgroup~\cite{Bacon2008}. This is exactly the symmetry respected by the central shadow, since
$
z_{xHx^{-1}}=z_H
$
for every \(x\in\mathcal G\).

The central shadow therefore defines a conjugacy invariant of the hidden subgroup. The relevant question for the HSCP is whether this invariant is complete on the family of subgroups under consideration.

\begin{proposition}[Central-shadow criterion for the HSCP]
Let \(\mathcal H\) be a promised family of subgroups of \(\mathcal G\). The QCT central shadow is a complete invariant for the HSCP on \(\mathcal H\) if and only if
\[
H_1, H_2\in\mathcal H,
\qquad
z_{H_1}=z_{H_2}
\quad\Longrightarrow\quad
H_1\sim_{\mathcal G}H_2.
\]
Equivalently, \(\mathcal H\) contains no pair of nonconjugate Gassmann-equivalent subgroups.
\end{proposition}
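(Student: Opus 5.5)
The proof is essentially definitional: it combines the conjugation invariance of the central shadow with the Gassmann characterization established just before the statement.

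\textbf{Defining the invariant.} I would first make precise what ``complete invariant for the HSCP on \(\mathcal H\)'' means. The map \(H\mapsto z_H\) is well defined on conjugacy classes, since \(z_{xHx^{-1}}=z_H\). To see this, substitute \(x\mapsto xy\) in the averaging sum defining \(z_H\); the sum over all of \(\mathcal G\) absorbs the extra conjugation. Hence \(z\) descends to a map
\[
\bar z:\mathcal H/\!\sim_{\mathcal G}\;\longrightarrow\;Z(\mathbb C[\mathcal G]),
\]
and completeness on \(\mathcal H\) means exactly that \(\bar z\) is injective.

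\textbf{Reducing to the first criterion.} Injectivity of \(\bar z\) is literally the stated implication: \(z_{H_1}=z_{H_2}\) implies \(H_1\sim_{\mathcal G}H_2\). In the converse direction, conjugate subgroups always have equal shadows by the invariance above, so no other failure mode is possible. This gives the ``if and only if.''

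\textbf{The equivalent formulation.} I would invoke the equivalence shown just before Corollary~\ref{cor:central-shadow-indistinguishability}: \(z_{H_1}=z_{H_2}\) holds iff \(|H_1\cap C|=|H_2\cap C|\) for all \(C\), that is, iff \(H_1\) and \(H_2\) are Gassmann equivalent. The \(C=\{e\}\) step supplies \(|H_1|=|H_2|\). Therefore \(\bar z\) fails to be injective exactly when \(\mathcal H\) contains a Gassmann-equivalent pair that is not conjugate. Conjugate pairs are automatically Gassmann equivalent, because \(|xHx^{-1}\cap C|=|H\cap C|\) since \(C\) is conjugation-stable. So the condition ``no nonconjugate Gassmann-equivalent pair'' is precisely the contrapositive of the implication.

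\textbf{Operational meaning and main obstacle.} Optionally, I would add a remark connecting this to the QCT. By Theorem~\ref{thm:central-shadow}, \(U_{\mathrm{QCT}}\) maps \(z_H\) to the profile \((m_\Gamma(H)/d_\Gamma)_\Gamma\) bijectively, so completeness can equally be phrased in terms of multiplicity profiles. By Corollary~\ref{cor:central-shadow-indistinguishability}, when completeness fails no central protocol can succeed. The only real obstacle is interpretive rather than technical: one must fix that ``complete invariant'' refers to injectivity on conjugacy classes within the promise \(\mathcal H\). Algorithmic recoverability of \(z_H\) from oracle queries is not part of the claim, and I would state this explicitly.
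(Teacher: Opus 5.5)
Your proposal is correct and follows essentially the same route as the paper: conjugation invariance of \(z_H\) makes \([H]_{\mathcal G}\mapsto z_H\) well defined, completeness means this map is injective on \(\mathcal H\), and the Gassmann characterization (with the \(C=\{e\}\) step forcing \(|H_1|=|H_2|\)) turns injectivity into the absence of nonconjugate Gassmann-equivalent pairs. The paper also phrases the argument through invertibility of the QCT on the center, which you rightly treat as an optional remark, and your explicit checks that \(z_{xHx^{-1}}=z_H\) and that conjugate subgroups are Gassmann equivalent supply details the paper leaves implicit.
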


\begin{proof}
Conjugate subgroups have identical central shadows. Since the QCT is invertible on \(Z(\mathbb C[\mathcal G])\), it preserves equality and inequality of central shadows. Thus the QCT central shadow reveals
\([H]_{\mathcal G}\) on \(\mathcal H\) exactly when
$
[H]_{\mathcal G}\mapsto z_H
$
is injective on the promised family, or equivalently when
\(\mathcal H\) contains no pair of nonconjugate Gassmann-equivalent
subgroups.
\end{proof}

The criterion admits both positive and negative examples even among small symmetric groups. A simple positive example occurs for the family of Klein four subgroups of \(S_4\). Consider the Klein four subgroups
\begin{gather*}
H_1= \{e,(12)(34),(13)(24),(14)(23)\}
\\
H_2=\{e,(12),(34),(12)(34)\}.  
\end{gather*}
Their class-intersection profiles are $(1,0,3,0,0)$ and $(1,2,1,0,0),$ respectively. They are nonconjugate, and
$
z_{H_1}\neq z_{H_2},
$
so their subgroup conjugacy classes are distinguishable by the central shadow.

However, this separation fails in general.  In $S_6$,  consider
\begin{align*}
H_1
&=
\{
e,\,
(34)(56),\,
(35)(46),\,
(36)(45)
\},\\
H_2
&=
\{
e,\,
(34)(56),\,
(12)(56),\,
(12)(34)
\}.
\end{align*}
Each subgroup intersects the identity class once and the class of cycle type
\(2^2 1^2\) three times, with empty intersection with every other conjugacy
class. So their central shadows coincide:
$z_{H_1}=z_{H_2}.$ The subgroups are Gassmann equivalent but nevertheless nonconjugate. Their orbit-size decompositions in the natural action on six points are respectively $1+1+4$ and $2+2+2$, which are invariant under relabeling. 

This also places the QCT relative to existing HSCP and character-sampling algorithms. Writing
$
\rho_H
=
\frac{1}{|\mathcal G|}
\sum_{x\in\mathcal G}
\ket{xH}\!\bra{xH}
$
for the single-copy hidden-subgroup state, Bacon showed that the single-copy hidden-subgroup conjugacy state
$
\rho_{[H]}
=
\frac{1}{|\mathcal G|}
\sum_{g\in\mathcal G}\rho_{gHg^{-1}}
$
decomposes after Fourier transformation into scalar irreducible blocks whose coefficients contain the same character averages $\sum_{h\in H}\chi_\Gamma(h)$~\cite{Bacon2008}.  At the single-copy level, the central shadow captures the conjugacy-invariant multiplicity data.  The multicopy problem is richer: the conjugacy average of several copies retains correlations between copies, and Bacon's Heisenberg-group algorithm uses a Clebsch--Gordan transform and its multiplicity register to solve the HSCP with two hidden-subgroup states.

For $S_n$, Bravyi, Gosset, Havlicek, and Schatzki construct a polynomial-size quantum circuit that prepares the uniform state over the conjugacy class $C_g$ of a specified permutation $g$ and, after the $S_n$ QFT, obtains character amplitudes proportional to $\chi_\Gamma(g)$ ~\cite{Bravyi2025Characters}. Their circuit exploits the same scalarization of central class sums underlying the QCT, but addresses a different task: it takes a specified conjugacy class as input, whereas the HSCP begins with an unknown hidden subgroup. The circuit alone does not solve the HSCP.

The proposition above is information-theoretic; by itself, it makes no claim about efficiency.  Even when distinct promised subgroup conjugacy classes have distinct central shadows, an efficient HSCP algorithm further requires efficient access to the shadow, an efficient QCT, and sufficient separation of the resulting spectral data to permit discrimination with polynomial resources.  Conversely, a Gassmann collision is absolute for central methods. No improvement in spectral probing or measurement can recover information already removed by projection onto $Z(\mathbb C[\mathcal G])$.  Multicopy transforms such as the Clebsch--Gordan construction lie beyond this one-copy central restriction and may access additional representation-theoretic correlations.

\subsection{Graph Isomorphism}

Graph Isomorphism provides an important test case for the limitations of
central information. The standard reduction of Graph Isomorphism to a
hidden-subgroup problem produces a nonabelian HSP over a
symmetric-group-related permutation group. This problem was one of the
principal motivations for the study of nonabelian Fourier sampling, but it
is also a setting in which some of its strongest limitations are
known~\cite{MooreRussellSchulman2008,HallgrenEtAl2010}.

For two rigid graphs on \(n\) vertices, the reduction is formulated over the wreath product
\[
\mathcal G
=
S_n\wr S_2
=
(S_n\times S_n)\rtimes S_2.
\]
In the nonisomorphic case the hidden subgroup is trivial, whereas in the isomorphic case it is generated by an involution exchanging the two graph
copies according to an isomorphism between their vertex sets. Under the standard embedding into \(S_{2n}\), these generators are fixed-point-free
involutions and hence have a definite conjugacy-class signature. This makes the reduction superficially well matched to a character-based analysis. The difficulty, however, is that this conjugacy-class signature does not make the two HSP hypotheses efficiently distinguishable by single-register Fourier information, much less identify the underlying graph isomorphism.

Indeed, Moore, Russell, and Schulman showed that the Fourier-sampling framework does not provide sufficient information for the Graph-Isomorphism-related hidden-subgroup problem over symmetric groups~\cite{MooreRussellSchulman2008}. Their result rules out even strong Fourier sampling in this setting; the obstruction is information-theoretic and cannot be overcome by choosing a different basis within an irreducible block. Since the QCT retains only the conjugacy-invariant component of this representation-theoretic information, the central restriction cannot bypass the Fourier-sampling barrier, and the limitation extends to joint measurements on a bounded number of coset states. Hallgren, Moore, R"otteler, Russell, and Sen showed that obtaining non-negligible information for the Graph Isomorphism HSP requires measurements entangled across $\Omega(n\log n)$ coset states~\cite{HallgrenEtAl2010}, so the missing information can reside in correlations among many hidden-subgroup states and remain inaccessible to any single-copy central statistic. Nor does the need for multicopy information automatically make representation-theoretic sieving sufficient. Moore, Russell, and \'{S}niady showed that a broad class of Kuperberg-style quantum sieve algorithms, which iteratively combine representations according to tensor-product decompositions, cannot yield a polynomial-time algorithm for the Graph Isomorphism HSP; within their model the required running time is \(e^{\Omega(\sqrt n)}\)~\cite{MooreRussellSniady2010}.

These results clarify the scope of the QCT rather than diminishing its role. Full subgroup recovery and central property testing are different computational tasks, and Graph Isomorphism demonstrates that the multiplicity data retained by the central shadow does not suffice for the full hidden-subgroup problem. For families with distinct central shadows, class-convolution operators may nevertheless be used as discriminating central probes by weighting the irreducible sectors differently. Such spectral processing can increase the separation between distinct shadows but, by Corollary~\ref{cor:central-shadow-indistinguishability}, cannot distinguish subgroups whose central shadows coincide.

The Graph Isomorphism obstruction also indicates where additional algebraic structure must enter. Passing beyond \(Z(\mathbb C[\mathcal G])\) to noncentral operator algebras can retain internal representation and relational information erased by centralization. For permutation actions, orbital and coherent algebras provide nice examples of such extensions, connecting representation-theoretic operators with coherent configurations and Weisfeiler--Leman-type refinement. Developing transforms for these larger algebras is beyond the scope here, but the central case establishes the conjugacy-invariant endpoint of what survives centralization and when noncommutativity may become necessary.
\section{Applications: Graphs, Walks, and Symmetric Hamiltonians}

The connection between Fourier transforms and graph spectra is already present in the abelian setting. A circulant graph on \(n\) vertices is a Cayley graph of the cyclic group \(\mathbb Z_n\), and its adjacency matrix is circulant and diagonalized by the discrete Fourier transform~\cite{Terras1999}, or on a quantum computer, the quantum Fourier transform. Generally, for a finite abelian group \(\mathcal G\), translation-invariant graph operators are convolution operators on \(\mathbb C[\mathcal G]\) and are diagonalized by the group Fourier transform; the eigenvalues of an abelian Cayley graph are the character sums over its connection set~\cite{Terras1999,Babai1979}. For a non-abelian group, ordinary convolution is no longer diagonalized by scalar characters, and the non-abelian Fourier transform decomposes into non-commutative matrix-valued blocks. However, a distinguished commutative sector survives. If the connection weights are invariant under conjugation, the convolution operator belongs to the center $Z(\mathbb{C}[\mathcal{G}])$. The passage from QFT on circulant graphs to QCT on conjugacy-invariant Cayley graphs extends the familiar Fourier description of cyclic graph spectra from abelian translation symmetry to non-abelian conjugacy symmetry. 

\begin{figure}[H]
\hfill
\begin{tikzpicture}[
    font=\sffamily\scriptsize,
    line cap=round,
    line join=round
]

\draw[black!60, line width=0.45pt, rounded corners=2.5pt]
    (0,0) rectangle (6.5,1.65);

\draw[black!55, line width=0.4pt]
    (3.2,0.15) -- (3.2,1.50);

\node[anchor=west,font=\sffamily\bfseries\scriptsize]
    at (0.16,1.40) {Permutohedron};

\node[anchor=west,font=\sffamily\scriptsize]
    at (0.16,1.10) {(adjacent transpositions)};

\node[anchor=west,font=\sffamily\bfseries\scriptsize]
    at (3.33,1.40) {Interior edges};

\node[anchor=west,font=\sffamily\scriptsize]
    at (3.33,1.10) {(additional transpositions)};

\draw[red,line width=1.6pt]
    (0.32,0.75) -- (1.95,0.75);
\node[anchor=west] at (2.18,0.75) {(12)};

\draw[blue,line width=1.6pt]
    (0.32,0.47) -- (1.95,0.47);
\node[anchor=west] at (2.18,0.47) {(23)};

\draw[green!70!black,line width=1.6pt]
    (0.32,0.19) -- (1.95,0.19);
\node[anchor=west] at (2.18,0.19) {(34)};

\draw[orange,line width=1.2pt,dashed]
    (3.49,0.75) -- (5.47,0.75);
\node[anchor=west] at (5.70,0.75) {(13)};

\draw[violet,line width=1.2pt,dashed]
    (3.49,0.47) -- (5.47,0.47);
\node[anchor=west] at (5.70,0.47) {(14)};

\draw[cyan!80!black,line width=1.2pt,dashed]
    (3.49,0.19) -- (5.47,0.19);
\node[anchor=west] at (5.70,0.19) {(24)};

\end{tikzpicture}
\end{figure}
\begin{figure}[H]
\centering
\vspace{-\intextsep}
\includegraphics[width=\columnwidth]{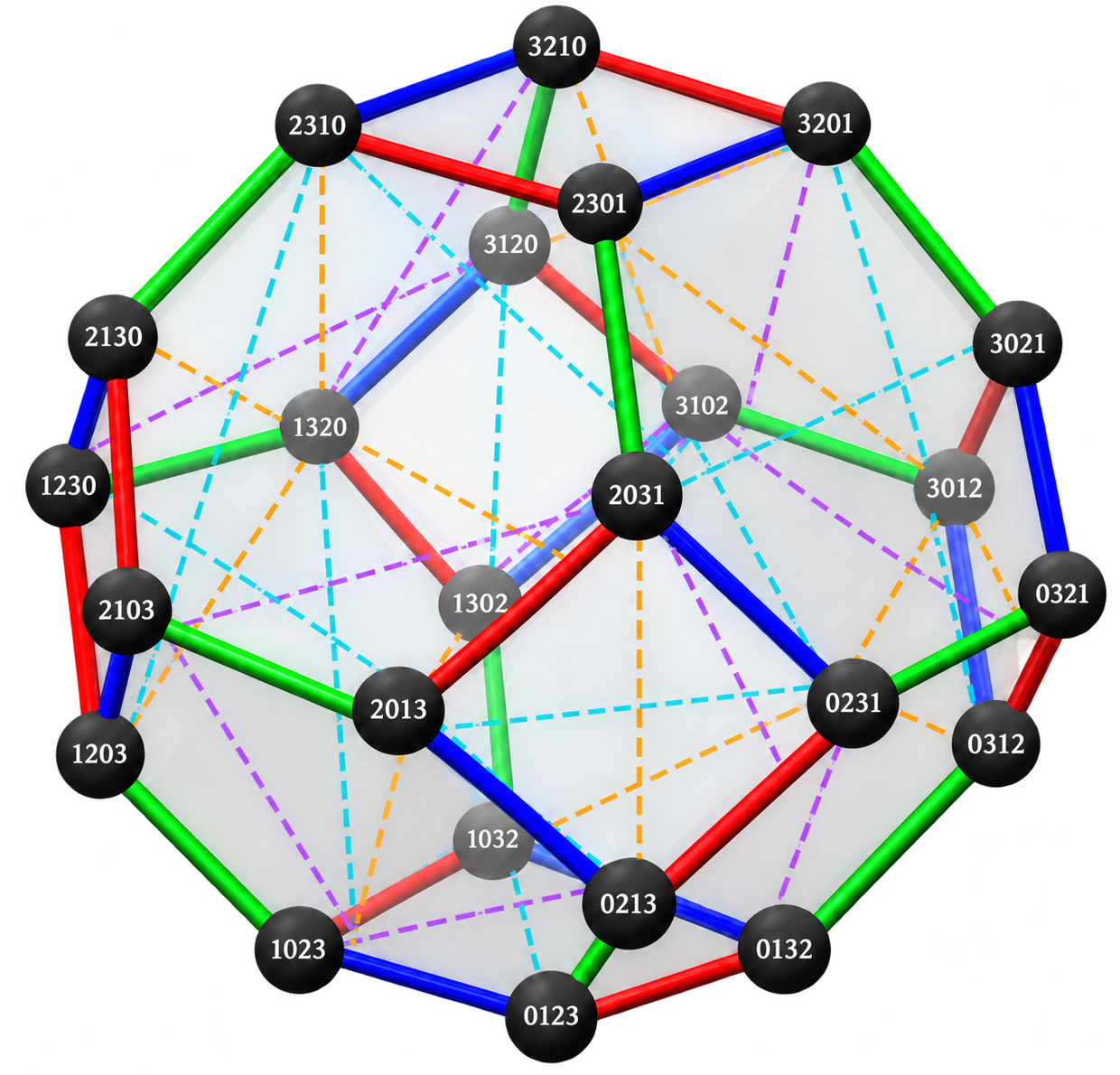}
\caption{
The permutohedral Cayley graph and its conjugacy completion for $S_4$. The solid edges are the adjacent transpositions $(12)$, $(23)$, and $(34)$ and form the permutohedral $1$-skeleton $\operatorname{Cay}(S_4,\{(12),(23),(34)\})$. The dashed edges are $(13)$, $(14)$, and $(24)$. Together these six transpositions form the conjugacy class $C_{(2,1^2)}$ and the normal Cayley graph $\operatorname{Cay}(S_4,C_{(2,1^2)})$. The permutohedral $1$-skeleton is a spanning subgraph of the normal
Cayley graph. The vertical embedding is graded by inversion number, or Coxeter length, from the identity $0123$ at the bottom to the longest element $3210$ at the top; orienting the solid edges upward is the weak Bruhat order on $S_4$.
}
\label{fig:s4-normal-cayley}
\end{figure}

For each  \(C\in\operatorname{Conj}(\mathcal G)\), define the relation
\[
R_C
=
\{(g,h)\in\mathcal G\times\mathcal G:gh^{-1}\in C\}
\]
and its adjacency matrix
\(A_C\in\mathbb C^{\mathcal G\times\mathcal G}\) by
\begin{equation}
(A_C)_{g,h}
=
\mathds{1}_{\{gh^{-1}\in C\}}.
\end{equation}
\(A_C\) is the adjacency matrix of the relation graph associated
with \(C\). Moreover,
$
A_C^{\mathsf T}=A_{C^{-1}},
$
so the relation graph is undirected exactly when \(C=C^{-1}\). After ordering
the conjugacy classes as \(C_1,\ldots,C_d\), let
$
A_i:=A_{C_i},
$ and $
K_i:=K_{C_i}.
$
The collection \(\{A_1,\ldots,A_d\}\) forms the \textit{conjugacy-class association scheme} of \(\mathcal G\), and its span is a commutative Bose--Mesner algebra~\cite{BrouwerCohenNeumaier1989,Godsil1993,ArnadottirGodsil2023}. 
If
\(\rho:\mathcal G\to GL(\mathbb C[\mathcal G])\)
is the left-regular representation, then
$
\rho(K_i)
=
\sum_{g\in C_i}\rho(g)
$
is the left-convolution operator associated with the class sum \(K_i\). In the group-element basis
\(\{\ket g\}_{g\in\mathcal G}\), this operator has matrix entries
\[
(\rho(K_i))_{h,h'}
=
\mathds{1}_{\{hh'^{-1}\in C_i\}}
=
(A_i)_{h,h'}.
\]
Thus the adjacency algebra of the conjugacy-class scheme is isomorphic to
the center of the group algebra,
$
\operatorname{span}\{A_i\}_{i=1}^{d}
\cong
Z(\mathbb C[\mathcal G]).
$
In fact,
\begin{equation}
A_iA_j
=
\sum_{k=1}^{d}p_{ij}^{k}A_k,
\end{equation}
where \(p_{ij}^{k}\) are the \emph{intersection numbers} of the scheme, counting the number of ways a path of types \(i,j\) whose endpoints have relation \(k\). The sums of the classes \(K_i\) are in \(Z(\mathbb C[\mathcal G])\), so the matrices \(A_i\) can also be diagonalized simultaneously.

The matrices \(A_i\) act on the full
\(|\mathcal G|\)-dimensional group-element space \(\mathbb C[\mathcal G]\), and yet preserve \(Z(\mathbb C[\mathcal G])\). Its restriction to the center,
expressed in the normalized conjugacy-class basis, is the class-convolution matrix \(T_C\) defined in Section III.
\[
T_C
=
\left[
\left.A_C\right|_{Z(\mathbb C[\mathcal G])}
\right]_{\{\ket{C_i}\}},
\qquad
(T_C)_{ij}
=
\langle C_i|A_C|C_j\rangle
\]
For \(S_4\), the Cayley-graph adjacency operators \(A_i\) are \(24\times24\), while their central restrictions \(T_C\) are the
\(5\times5\) class-convolution matrices displayed in \eqref{s4-classconv-matrices}.


\begin{figure}[H]
\centering
\resizebox{0.95\columnwidth}{!}{%
\begin{tikzpicture}[
    x=1cm,
    y=1cm,
    line cap=round,
    line join=round,
    classlabel/.style={
    font=\fontsize{14pt}{17pt}\selectfont
    },
    weight/.style={
        font=\fontsize{11pt}{10pt}\selectfont,
        inner sep=0pt
    },
    edge/.style={
        draw=black!58,
        line width=1.25pt
    }
]


\fill[blue!7, fill opacity=.65, rounded corners=8pt]
    (0.85,3.80) rectangle (7.85,5.95);
\draw[blue!60, dashed, line width=.55pt, rounded corners=8pt]
    (0.85,3.80) rectangle (7.85,5.95);

\fill[red!6, fill opacity=.65, rounded corners=8pt]
    (0.85,0.30) rectangle (7.85,2.45);
\draw[red!55, dashed, line width=.55pt, rounded corners=8pt]
    (0.85,0.30) rectangle (7.85,2.45);

\node[align=center,font=\large] at (0.,4.88)
    {even\\classes};

\node[align=center,font=\large] at (0.0,1.38)
    {odd\\classes};


\coordinate (id)  at (2.10,4.85);
\coordinate (c22) at (4.35,4.85);
\coordinate (c31) at (6.55,4.85);

\coordinate (c211) at (2.65,1.55);
\coordinate (c4)   at (5.85,1.55);


\draw[edge] (id)   -- (c211);
\draw[edge] (c22)  -- (c211);
\draw[edge] (c31)  -- (c211);
\draw[edge] (c22)  -- (c4);
\draw[edge] (c31)  -- (c4);


\node[weight] at (2,3.22) {$\sqrt{6}$};

\node[weight] at (3.2,3.38) {$\sqrt{2}$};

\node[weight] at (3.95,2.17) {$2\sqrt{3}$};

\node[weight] at (5.52,3.28) {$2\sqrt{2}$};

\node[weight] at (6.7,3.48) {$2\sqrt{3}$};


\shade[ball color=red!85!black]
    (id) circle (0.31);

\shade[ball color=orange!90!yellow]
    (c22) circle (0.31);

\shade[ball color=green!75!black]
    (c31) circle (0.31);

\shade[ball color=blue!85!black]
    (c211) circle (0.31);

\shade[ball color=violet!90!black]
    (c4) circle (0.31);


\node[classlabel,above=6pt of id]
    {$(1^4)$};

\node[classlabel,above=6pt of c22]
    {$(2^2)$};

\node[classlabel,above=6pt of c31]
    {$(3,1)$};

\node[classlabel,below=6pt of c211]
    {$(2,1^2)$};

\node[classlabel,below=6pt of c4]
    {$(4)$};

\end{tikzpicture}%
}
\caption{
Bipartite class graph for the transposition class of $S_4$. The vertices are the normalized conjugacy-class states
$
|C_i\rangle,
$
and edge weights are the nonzero matrix elements of the central restriction $T_{(2,1^2)}$ of $A_{(2,1^2)} = \sum_{\tau\in C_{(2,1^2)}}L_\tau.$ Since multiplication by a transposition reverses permutation parity, $T_{(2,1^2)}$ couples only even and odd classes.
}

\label{fig:s4_restriction}
\end{figure}

A \textit{normal Cayley graph} is obtained by choosing a connection set \(\mathcal S\subseteq\mathcal G\setminus\{e\}\) that is both inverse closed and conjugacy invariant. Equivalently, \(\mathcal S\) is an inversion-closed union of conjugacy classes,
\[
\mathcal S=\bigcup_{i\in \mathcal I}C_i,
\qquad
C_i\subseteq\mathcal S
\Longrightarrow
C_i^{-1}\subseteq\mathcal S.
\]
The graph \(\operatorname{Cay}(\mathcal G,\mathcal S)\) has general adjacency operator
$
A_{\mathcal S}
=
\sum_{i\in I}A_i.
$
Assigning a weight \(w_i\) to each conjugacy class produces the weighted central adjacency
\begin{equation}
A_w
=
\sum_{i=1}^{d}w_iA_i.
\end{equation}
Related work treats general weighted Cayley graphs through irreducible matrix coefficients~\cite{BeckEtAl2024}. A normal Cayley graph specifies one operator in the larger commuting adjacency algebra of the conjugacy-class scheme. The QCT provides a common spectral basis for this entire algebra, simultaneously covering every choice of connection set and class weights.

Association schemes have two canonical bases. The matrices $A_i$ form the
adjacency, or relation, basis, while the primitive idempotent projectors
$P_\Gamma$ form their common spectral basis.

\begin{remark}[Peirce interpretation]
Define the primitive central idempotents of
$\mathbb C[\mathcal G]$
\begin{equation}
P_\Gamma
=
\frac{d_\Gamma}{|\mathcal G|}
\sum_{g\in\mathcal G}
\chi_\Gamma(g)\,\rho(g).
\
\end{equation}
$P_\Gamma$ projects onto the dual $\Gamma^*$-isotypic component, and satisfies
$
P_\Gamma^\dagger=P_\Gamma,
$
\[
P_\Gamma P_\mu
=
\delta_{\Gamma\mu}P_\Gamma,
\qquad
\sum_{\Gamma\in\widehat{\mathcal G}}P_\Gamma=
I_{|\mathcal G|},
\]
This is a \textit{central Peirce decomposition}
\[
\operatorname{span}\{A_i\}_{i=1}^{d}
=
\bigoplus_{\Gamma\in\widehat{\mathcal G}}
\mathbb C P_\Gamma.
\]
\end{remark}
Writing
$
A_i
=
\sum_\Gamma p_i(\Gamma)P_\Gamma,
$
the coefficients $p_i(\Gamma)$ are the eigenvalues of $A_i$ on the range of $P_\Gamma$. By Schur's lemma,
\begin{equation}
p_i(\Gamma)
=
\frac{|C_i|\,\chi_\Gamma(C_i^{-1})}{d_\Gamma}
=
\frac{|C_i|\,\overline{\chi_\Gamma(C_i)}}{d_\Gamma}.
\label{eigenp}
\end{equation}
In the regular representation, each projector has rank $d_\Gamma^2$. For an individual operator, distinct sectors may share the same eigenvalue.

The matrix
$
\mathsf P_{\mathrm{eig}}
=
\bigl(p_i(\Gamma)\bigr)_{\Gamma,i}
$
is the first eigenmatrix of the conjugacy-class association scheme. The standard eigenmatrix orthogonality relations supply the class-size and multiplicity weights that render the character transform
unitary~\cite{BrouwerCohenNeumaier1989,ArnadottirGodsil2023}. After normalization, the change between the adjacency and primitive idempotent bases is expressed by the QCT.

Recent work on association schemes also provides useful context for future direction. Translation association schemes make the connection between convolution, group characters, and graph eigenmatrices, while quantum association schemes extend the formalism to noncommutative finite quantum spaces and quantum graphs~\cite{gromada2026quantumassociationschemes}.  These coherent algebras carry both composition and a generalized Schur product. Under composition, each is a finite-dimensional \(C^*\)-algebra whose center admits a QCT-type transform: an orthonormal central basis and the primitive central projections, normalized using a faithful trace, are related by a unitary transform that diagonalizes multiplication by central operators. The center need not be closed under the generalized Schur product and therefore need not itself define a quantum association scheme.

Returning to the conjugacy-class association scheme, its algebra contains more than adjacency matrices. Polynomials and functions of central adjacency operators remain diagonal in the common character basis. Their restrictions to class space can therefore be represented and processed in the smaller \(d\)-dimensional basis. These operators include graph Laplacians, diffusion operators, random-walk kernels, and Hamiltonians generating continuous-time quantum walks. The following subsections develop these applications.

\subsection{Laplacians, Diffusion, and Graph Filtering}

Let
$
A_w=\sum_{i\in I}w_iA_i
$
be the weighted adjacency operator of a normal Cayley graph of $\mathcal G$, where the weights are real and nonnegative and satisfy $w_i=w_{i^*}$ whenever $C_{i^*}=C_i^{-1}$. The graph is weighted regular with degree
$
\kappa
=
\sum_{i\in \mathcal I}w_i|C_i|.
$
For $\kappa>0$, its combinatorial and normalized Laplacians are
\begin{equation}
\mathscr L
=
D-A_w
=
\kappa I-A_w,
\qquad
\mathscr L_{\rm norm}
=
I-\frac{1}{\kappa}A_w.
\end{equation}

Writing \(\theta_\Gamma\) for the adjacency eigenvalues obtained above, the
Laplacian eigenvalues in sector \(\Gamma\) are
\[
\mu_\Gamma
=
\kappa-\theta_\Gamma,
\qquad
\nu_\Gamma
=
1-\frac{\theta_\Gamma}{\kappa},
\]
for the combinatorial and normalized Laplacians, respectively. The trivial irrep, for which
$
\chi_{\rm triv}(g)=1
$
for every $g\in\mathcal G$, gives
$
\theta_{\rm triv}=\kappa
$
and
$
\mu_{\rm triv}=0,
$
with eigenvector the constant vector. This zero eigenvalue is unique when the graph is connected.

On the central subspace, let
$
T_w=\left.A_w\right|_{Z(\mathbb C[\mathcal G])}
$
and
$
\mathscr L_{\rm cl}=\kappa I-T_w.
$
The QCT diagonalizes these compressed operators:
\[
\begin{aligned}
U_{\rm QCT}\mathscr L_{\rm cl}U_{\rm QCT}^\dagger
&=
\operatorname{diag}(\mu_\Gamma),\\
U_{\rm QCT}
\left(
I-\frac{1}{\kappa}T_w
\right)
U_{\rm QCT}^\dagger
&=
\operatorname{diag}(\nu_\Gamma).
\end{aligned}
\]

For a general undirected graph $\mathcal X$ with Laplacian $\mathscr L$,
graph Fourier analysis chooses an orthonormal eigenbasis $\{u_j\}$ satisfying
$
\mathscr Lu_j=\mu_j u_j
$
and defines the graph Fourier coefficients of a signal $f$ by
$
\widehat f_j=\langle u_j,f\rangle.
$
In matrix form, if
$
\mathscr L=U\Gamma U^\dagger,
$
the graph Fourier transform is
$
\mathscr F_{\mathcal X}=U^\dagger.
$
Unlike the ordinary Fourier transform, $U$ depends on the particular graph and must be obtained from its spectral decomposition~\cite{ShumanEtAl2013}. For conjugacy-invariant Cayley graphs, the QCT basis is fixed by the representation theory of $\mathcal G$ and does not depend on the class weights.

Functional calculus preserves the same spectral projectors. On the full
group-element space,
\[
e^{-t\mathscr L}
=
\sum_{\Gamma\in\widehat{\mathcal G}}
e^{-t\mu_\Gamma}P_\Gamma,
\qquad
e^{-it\mathscr L}
=
\sum_{\Gamma\in\widehat{\mathcal G}}
e^{-it\mu_\Gamma}P_\Gamma.
\]
The first operator is the heat semigroup, or continuous-time random-walk evolution, while the second is continuous-time quantum-walk evolution with the Laplacian as its Hamiltonian~\cite{FarhiGutmann1998}. Restricting these operators to the central subspace,
\[
\begin{aligned}
e^{-t\mathscr L_{\rm cl}}
&=
U_{\rm QCT}^\dagger
\operatorname{diag}(e^{-t\mu_\Gamma})
U_{\rm QCT},\\
e^{-it\mathscr L_{\rm cl}}
&=
U_{\rm QCT}^\dagger
\operatorname{diag}(e^{-it\mu_\Gamma})
U_{\rm QCT}.
\end{aligned}
\]

A graph filter $\varphi(\mathscr L_{\rm cl})$, such as the resolvent low-pass filter
$
(I+\alpha\mathscr L_{\rm cl})^{-1}
$
or a band-pass filter, also acts diagonally in the QCT basis,
\[
\varphi(\mathscr L_{\rm cl})
=
U_{\rm QCT}^\dagger
\operatorname{diag}\bigl(\varphi(\mu_\Gamma)\bigr)
U_{\rm QCT}.
\]
The irreducible-character sectors $\Gamma$ define the graph-frequency
decomposition, ordered by the Laplacian eigenvalues $\mu_\Gamma$, with
larger values indicating higher graph frequencies~\cite{ShumanEtAl2013,HammondEtAl2011}.

\subsection{Random and Quantum Walks}

Let
$
p:\mathcal G\to[0,1]
$
be a conjugacy-invariant probability distribution, and write
$
p_i=p(s)
$
for $s\in C_i$. Its normalization is
$
\sum_{i=1}^{d}|C_i|p_i=1.
$
Using the class-sum adjacency operators defined above, the transition operator
of the discrete-time random walk is
\begin{equation}
\mathsf P
=
\sum_{i=1}^{d}p_iA_i.
\end{equation}
Viewing \(p\) as a class function, define
$
\theta_\Gamma
:=
\widehat p(\Gamma),
$
where \(\widehat p(\Gamma)\) is given by Eq.~\eqref{eigenp}. Let
$
\mathsf P_{\rm cl}
=
\left.\mathsf P\right|_{\mathrm{Cl}(\mathcal G)}
$
denote the induced action on the class-function space. The discrete character transform exposes the scalar eigenvalues:
\begin{equation}
U_{\rm QCT}\mathsf P_{\rm cl}U_{\rm QCT}^{\dagger}
=
\operatorname{diag}
\left(
\theta_\Gamma
\right)_{\Gamma\in\widehat{\mathcal G}}.
\label{eq:qct-random-walk-diagonalization}
\end{equation}
A central random walk on the full group algebra is compressed by the QCT to
one spectral coordinate for each irreducible character.

For a symmetric ergodic walk, the trivial representation satisfies
$
\theta_{\rm triv}=1,
$
while the remaining eigenvalues determine convergence to stationarity. The
absolute spectral gap is
$
\gamma
=
1-
\max_{\Gamma\neq{\rm triv}}
|\theta_\Gamma|.
$
The mixing behavior of a conjugacy-invariant walk can be expressed in terms of these character ratios.

A standard example is the random-transposition walk on $S_n$. Let
$C_{\rm tr}$ denote the conjugacy class of transpositions, with
$
|C_{\rm tr}|=\binom{n}{2},
$
and choose each transposition $s\in C_{\rm tr}$ with probability
$
p(s)=\binom{n}{2}^{-1}.
$
Equation~\eqref{eigenp} becomes
\[
\theta_\Gamma
=
\frac{\chi_\Gamma((12))}{d_\Gamma}.
\]
The spectrum of the walk is encoded by the normalized characters of the transposition class, as in the classical representation-theoretic analysis of random walks on finite groups~\cite{Diaconis1988}. The non-lazy random-transposition walk is periodic because every step changes permutation parity; in particular, the sign representation has eigenvalue $-1$. A lazy version of the walk is required for convergence to the uniform distribution in total variation.

Repeated evolution remains diagonal in character coordinates, which is the character-theoretic version of diagonalizing a translation-invariant walk by the ordinary Fourier transform. Convolution becomes multiplication, and the $t$-step evolution is obtained by raising each character-sector eigenvalue to the $t$th power. The QCT records one sector eigenvalue \(\theta_\Gamma\) for each
\(\Gamma\in\widehat{\mathcal G}\), whereas on the regular
representation \(\mathbb C[\mathcal G]\), this eigenvalue has
multiplicity \(d_\Gamma^2\). Therefore,
\begin{equation}
\operatorname{Tr}(\mathsf P^t)
=
\sum_{\Gamma\in\widehat{\mathcal G}}
d_\Gamma^2\theta_\Gamma^t.
\end{equation}
Translation invariance makes the diagonal entries of \(\mathsf P^t\)
equal, so the \(t\)-step return probability to any
\(g\in\mathcal G\) is
\begin{equation}
(\mathsf P^t)_{g,g}
=
\frac{1}{|\mathcal G|}
\sum_{\Gamma\in\widehat{\mathcal G}}
d_\Gamma^2\theta_\Gamma^t.
\label{eq:return-probability}
\end{equation}

The same spectral description applies to continuous-time quantum walks. Let
$
H
=
\sum_{i=1}^{d}w_iA_i
$
be a central Hermitian Hamiltonian, with class weights satisfying
$
w_{i^*}=\overline{w_i}
$
whenever
$
C_{i^*}=C_i^{-1}.
$
Let $w\in\mathrm{Cl}(\mathcal G)$ be the class function defined by
$
w(g)=w_i
$
for $g\in C_i$. The Hamiltonian eigenvalues are
$
\eta_\Gamma
:=
\widehat w(\Gamma),
$
where \(\widehat w(\Gamma)\) is given by Eq.~\eqref{eigenp}. Its
compressed action
$
H_{\rm cl}
=
\left.H\right|_{\mathrm{Cl}(\mathcal G)}
$
has the evolution
\begin{equation}
e^{-itH_{\rm cl}}
=
U_{\rm QCT}^{\dagger}
\operatorname{diag}
\left(
e^{-it\eta_\Gamma}
\right)_{\Gamma\in\widehat{\mathcal G}}
U_{\rm QCT}.
\label{eq:qct-quantum-walk-evolution}
\end{equation}
Evolution under a central Hamiltonian reduces in QCT coordinates to
independent phase evolution of the irreducible-character
sectors~\cite{FarhiGutmann1998}.

The diagonal form also allows a direct measurement interface for spectral information. For a normalized class-function state
$
|\psi\rangle
=
\sum_{i=1}^{d}a_i|C_i\rangle,
$
\begin{equation}
U_{\rm QCT}|\psi\rangle
=
\sum_{\Gamma\in\widehat{\mathcal G}}
\widehat a_\Gamma|\Gamma\rangle.
\label{qct-coords}
\end{equation}
Measurement in the character basis samples $\Gamma$ according to
$
\Pr(\Gamma\mid\psi)
=
|\widehat a_\Gamma|^2.
$
The character-sector weights can be sampled without reconstructing the full
transformed state classically, and observables diagonal in the character
basis can be estimated from repeated measurements.

For a real-valued function $\varphi $ on the spectrum of $H_{\rm cl}$,
\begin{equation}
\varphi (H_{\rm cl})
=
U_{\rm QCT}^{\dagger}
\left(
\sum_{\Gamma\in\widehat{\mathcal G}}
\varphi (\eta_\Gamma)
|\Gamma\rangle\langle\Gamma|
\right)
U_{\rm QCT}.
\end{equation}
Its expectation in the state $|\psi\rangle$ is
\begin{equation}
\langle \varphi (H_{\rm cl})\rangle_\psi
=
\sum_{\Gamma\in\widehat{\mathcal G}}
|\widehat a_\Gamma|^2\varphi (\eta_\Gamma).
\label{eq:qct-spectral-expectation}
\end{equation}
$
\varphi (x)=x^k
$
gives the spectral moments
$
\langle H_{\rm cl}^k\rangle_\psi
=
\sum_{\Gamma\in\widehat{\mathcal G}}
|\widehat a_\Gamma|^2\eta_\Gamma^k.
$

For a subset
$
S\subseteq\widehat{\mathcal G},
$
the projector onto the sectors indexed by $S$ is
\begin{equation}
\Pi_S
=
U_{\rm QCT}^{\dagger}
\left(
\sum_{\Gamma\in S}
|\Gamma\rangle\langle\Gamma|
\right)
U_{\rm QCT},
\end{equation}
with
$
\langle\psi|\Pi_S|\psi\rangle
=
\sum_{\Gamma\in S}
|\widehat a_\Gamma|^2.
$
This quantity measures the total spectral weight carried by a chosen family of symmetry sectors.

For quantum walks, the first moments determine average spectral quantities, higher moments probe the eigenvalue distribution seen by the initial state, and sector projectors determine its overlap with prescribed symmetry
components. The phases $e^{-it\eta_\Gamma}$ control periodicity, interference, time-averaged mixing, and conditions for state transfer. The QCT acts both as a diagonalizing transform for class-invariant walks and as an interface for sampling their symmetry-adapted spectral observables.

\subsection{Class-Symmetric Hamiltonians and Exchange Dynamics}

Let
$
\rho:\mathcal G\to U(\mathcal H)
$
be a unitary representation on a physical Hilbert space $\mathcal H$. A
\emph{class-symmetric} Hamiltonian is one of the form
\begin{equation}
H
=
\sum_{g\in\mathcal G}J_g\,\rho(g),
\end{equation}
where
$
J_{hgh^{-1}}=J_g
$
and
$
J_{g^{-1}}=\overline{J_g}.
$
The first condition makes the coupling constant on each conjugacy class,
while the second ensures that $H=H^\dagger$. In terms of the class sums
$
K_i,
$
the Hamiltonian is
$
H
=
\sum_{i=1}^{d}J_i\,\rho(K_i),
$
where $J_i=J_g$ for $g\in C_i$.

Each $\rho(K_i)$ commutes with $\rho(\mathcal G)$. If the representation decomposes as
\[
\mathcal H
\cong
\bigoplus_{\Gamma\in\widehat{\mathcal G}}
\mathbb C^{m_\Gamma}\otimes V_\Gamma,
\]
then by Schur's lemma
\[
\left.\rho(K_i)\right|_{\mathbb C^{m_\Gamma}\otimes V_\Gamma}
=
I_{m_\Gamma}
\otimes
\frac{|C_i|\chi_\Gamma(C_i)}{d_\Gamma}
I_{V_\Gamma}.
\]
The energy associated with the irreducible sector $\Gamma$ is
\begin{equation}
E_\Gamma
=
\sum_{i=1}^{d}
J_i\,
\frac{|C_i|\chi_\Gamma(C_i)}{d_\Gamma}.
\end{equation}
The QCT diagonalizes the class-algebra representation of these operators, while their action on $\mathcal H$ follows from the
irreducible decomposition of $\rho$.

A paradigmatic example is the uniform permutation-exchange interaction of
$n$ identical spin-$\frac12$ particles. Let $\pi_{ij}:=\rho((ij))$ denote the operator that swaps spins $i$ and $j$. The group $S_n$ acts on the $n$-spin space by permuting tensor factors, and
the Hamiltonian
\begin{equation}
H_{\rm ex}
=
J\sum_{1\leq i<j\leq n}\pi_{ij}
=
J\,\rho\!\left(K_{(2,1^{n-2})}\right)
\end{equation}
is the class sum of all transpositions in $S_n$. Its spectrum on the
$n$-spin Hilbert space is
$
E_\Gamma
=
J\binom{n}{2}
\frac{\chi_\Gamma((12))}{d_\Gamma},
$
where only partitions $\lambda\vdash n$ with at most two rows occur in
$(\mathbb C^2)^{\otimes n}$ by Schur--Weyl duality~\cite{FultonHarris1991}. This recovers the energy levels of the
permutation-symmetric exchange model without directly diagonalizing its
$2^n\times2^n$ Hamiltonian. The trivial representation has energy
$
E_{\rm triv}=J\binom{n}{2};
$
whether this is the highest or lowest energy depends on the sign of $J$. For even $n$, the total-spin-zero sector is the sector labeled by the partition $(n/2,n/2)$, not the sign representation, except when $n=2$. On the other hand, a nearest-neighbor chain
$
H_{\rm nn}
=
J\sum_{i=1}^{n-1}\pi_{i,i+1}
$
is not a class sum, since the adjacent transpositions form only part of the
full transposition class. The set of terms is not invariant under conjugation
by $S_n$, so $H_{\rm nn}$ does not commute with the full permutation action
and the character values alone do not diagonalize it.

Returning to $S_4$, the most general central class-sum Hamiltonian on the
24-dimensional regular space is
\begin{align}
H
={}&
J_0I
+J_2\rho\!\left(K_{(2,1^2)}\right)
+J_{22}\rho\!\left(K_{(2^2)}\right)
\nonumber\\
&+
J_3\rho\!\left(K_{(3,1)}\right)
+J_4\rho\!\left(K_{(4)}\right)
\end{align}
where $J_0,J_2,J_{22},J_3,J_4\in\mathbb R$. Modulo the overall energy shift $J_0I$, this is a four-parameter family.
Restricting to the five-dimensional center and applying the QCT gives the diagonal matrix $\operatorname{diag}(E_\Gamma)$, where
\begin{align}
E_\Gamma
={}&
J_0
+J_2\frac{6\,\chi_\Gamma((12))}{d_\Gamma}
+J_{22}\frac{3\,\chi_\Gamma((12)(34))}{d_\Gamma}
\nonumber\\
&+
J_3\frac{8\,\chi_\Gamma((123))}{d_\Gamma}
+J_4\frac{6\,\chi_\Gamma((1234))}{d_\Gamma}.
\end{align}
For the trivial representation,
$
\chi_{\rm triv}(g)=1
$
for all $g\in S_4$, giving
$
E_{\rm triv}
=
J_0+6J_2+3J_{22}+8J_3+6J_4.
$
The remaining four sector energies follow from the $S_4$ character table.

\subsection{Molecular and Permutation--Inversion Dynamics}

Class-sum Hamiltonians also appear in molecular physics. Suppose that a finite permutation--inversion group $\mathcal G$ acts regularly on a set of equivalent molecular configurations, which may then be labeled by
$g\in\mathcal G$. If tunneling or rearrangement processes are represented by elements $s\in\mathcal G$, an effective Hamiltonian can take the form
\begin{equation}
H
=
E_0I+\sum_{s\in\mathcal G}t_s\,\rho(s),   
\end{equation}
where $\rho(s)$ permutes the configurations and
$
t_{s^{-1}}=\overline{t_s}.
$
The configuration space need not be the group itself; it can be a homogeneous $\mathcal G$-set.

If the amplitudes are constant on conjugacy classes, so that
$
t_{hsh^{-1}}=t_s
$
for all $s,h\in\mathcal G$, then
\[
H
=
E_0I+\sum_{i=1}^{d}t_i\,\rho(K_i)
\]
is a class-symmetric Hamiltonian. In the
language of permutation--inversion groups used in molecular spectroscopy~\cite{Tinkham2003,BunkerJensen1998}, this condition assigns
equal amplitudes to rearrangements related by conjugation. The energy in each irreducible sector is obtained from the character formula of the preceding subsection. Such class-invariant models provide a compact description of tunneling splittings associated with large-amplitude molecular motions.

\begin{figure}[H]
\centering
\resizebox{0.96\columnwidth}{!}{%
\begin{tikzpicture}[
    >=Stealth,
    line cap=round,
    line join=round,
    transform/.style={
        ->,
        line width=1.15pt,
        draw=black!75
    },
    oplabel/.style={
        font=\large
    },
    statelabel/.style={
        font=\Large
    }
]


\coordinate (X)  at (0,4.05);
\coordinate (SX) at (5.30,4.05);
\coordinate (HX) at (0,-0.65);
\coordinate (HS) at (5.30,-0.65);

\node[inner sep=0pt] at (X)
    {\includegraphics[width=2.5cm]{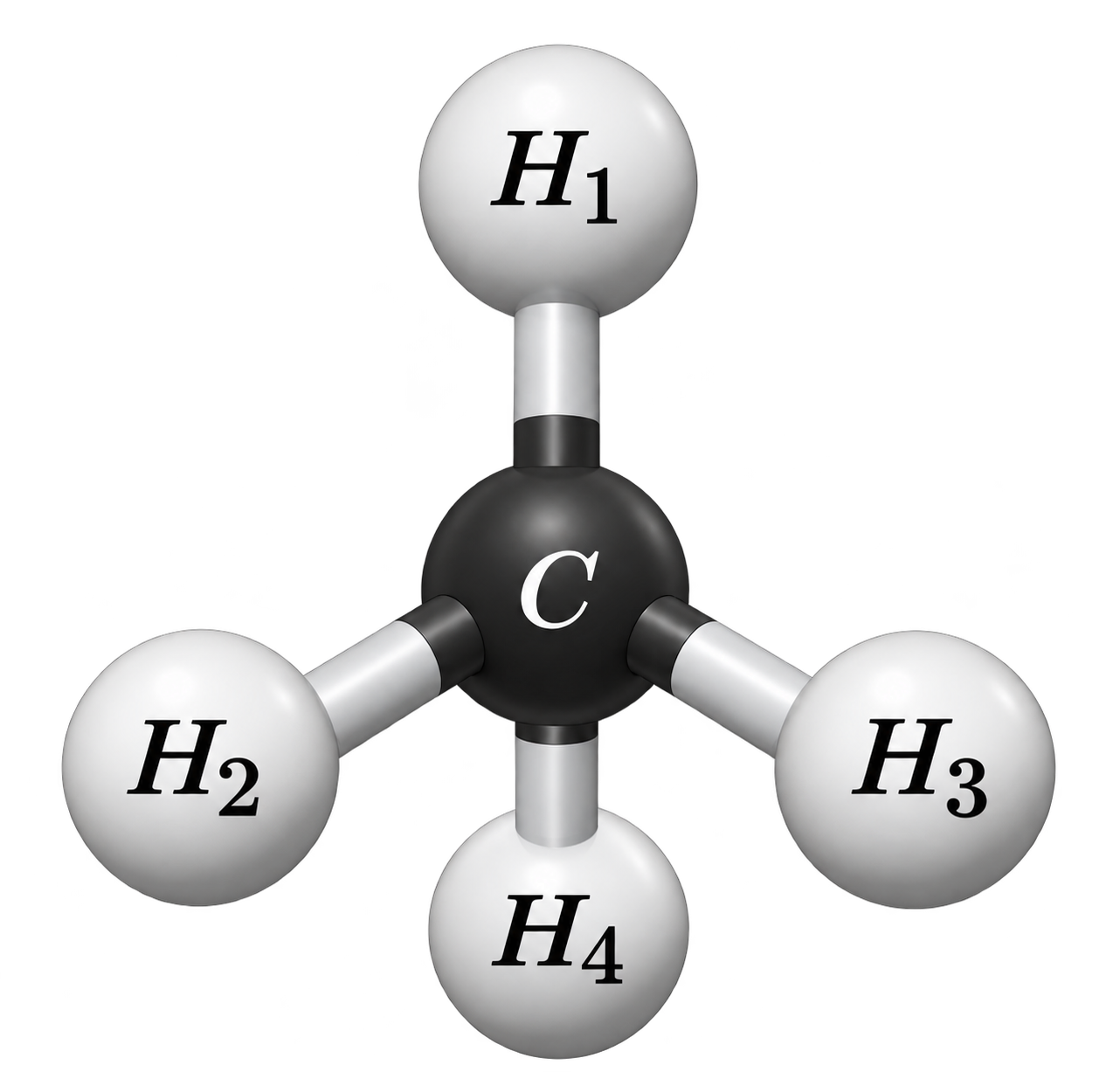}};

\node[inner sep=0pt] at (SX)
    {\includegraphics[width=2.5cm]{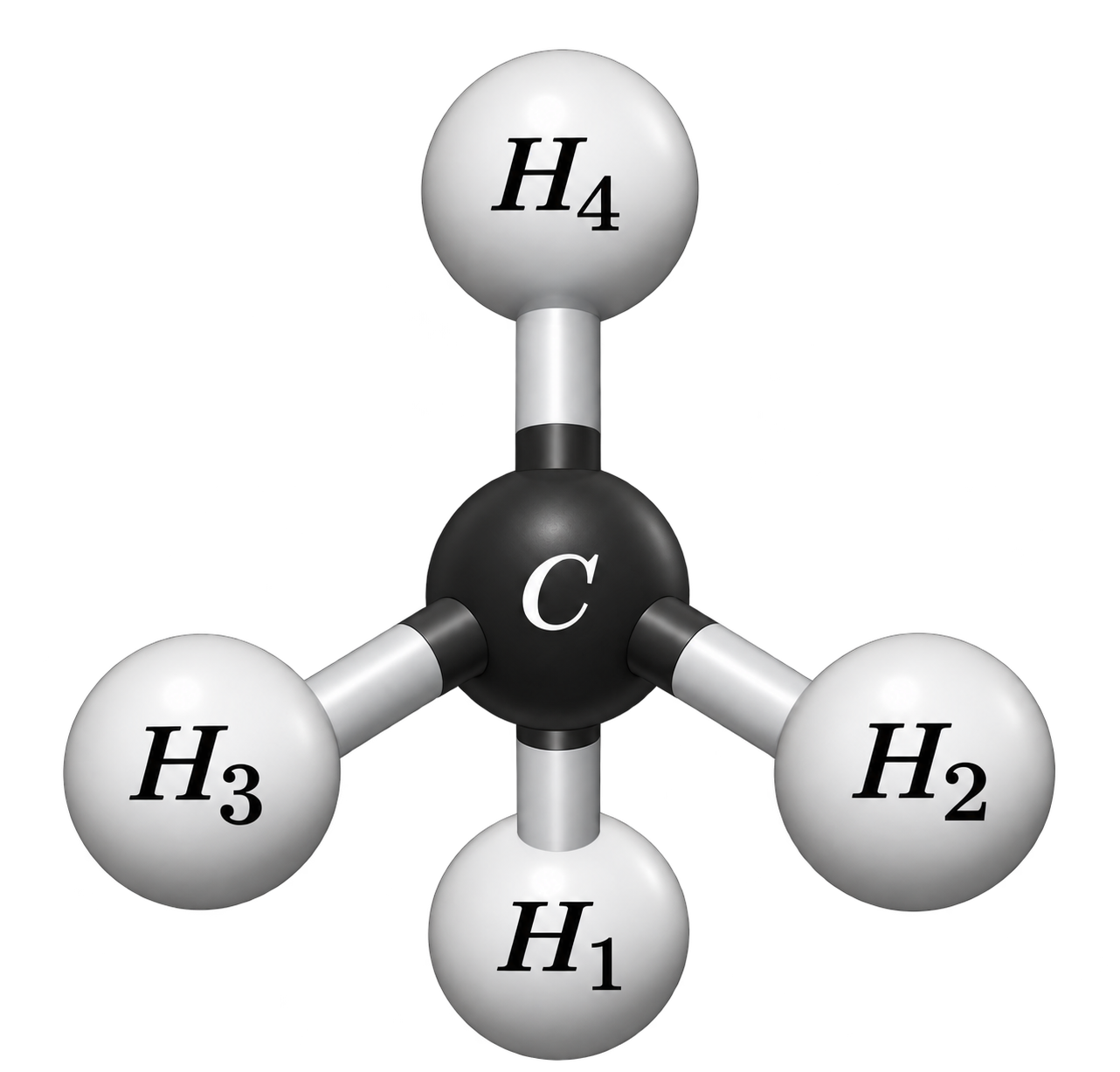}};

\node[inner sep=0pt] at (HX)
    {\includegraphics[width=2.5cm]{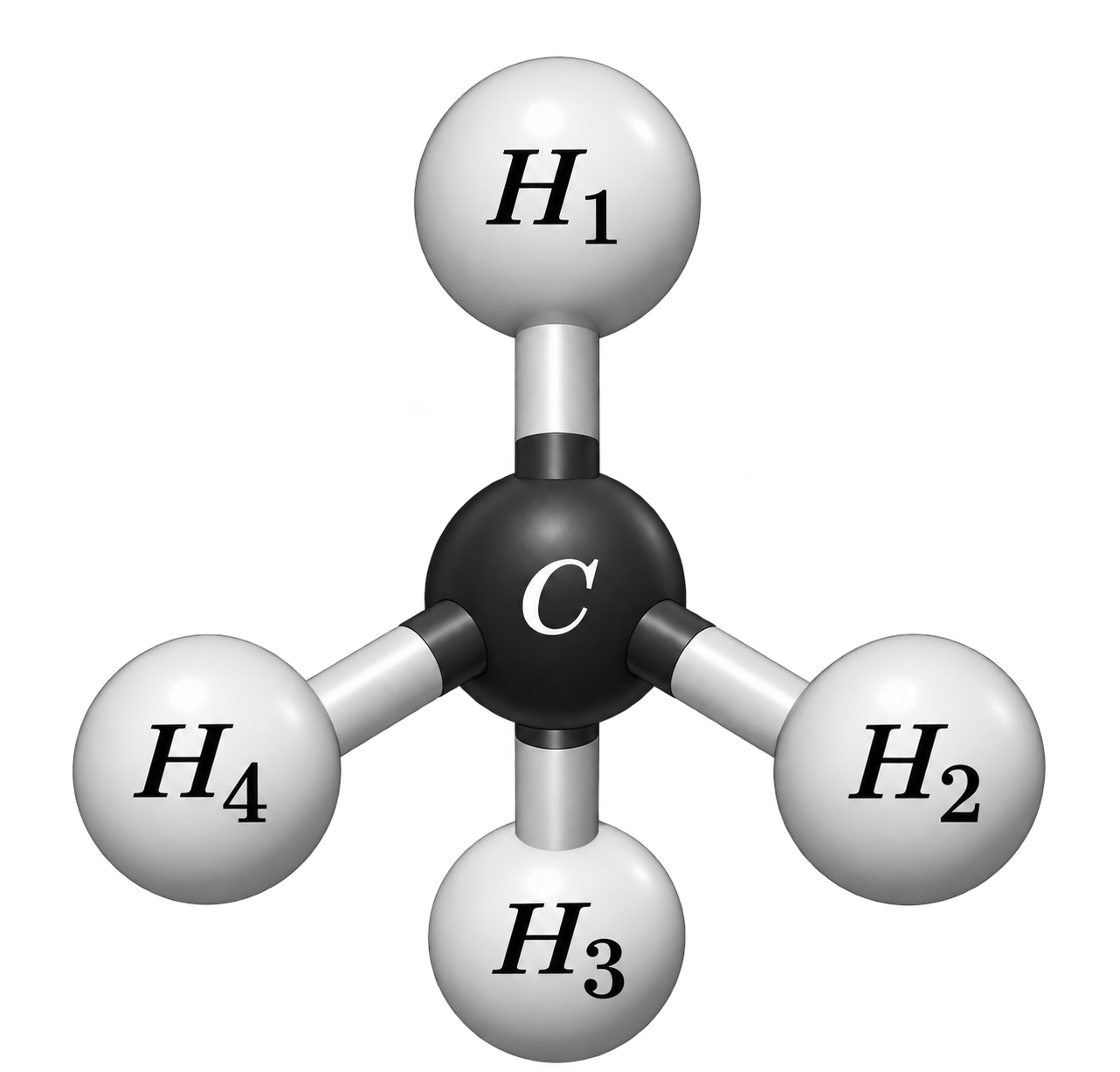}};

\node[inner sep=0pt] at (HS)
    {\includegraphics[width=2.5cm]{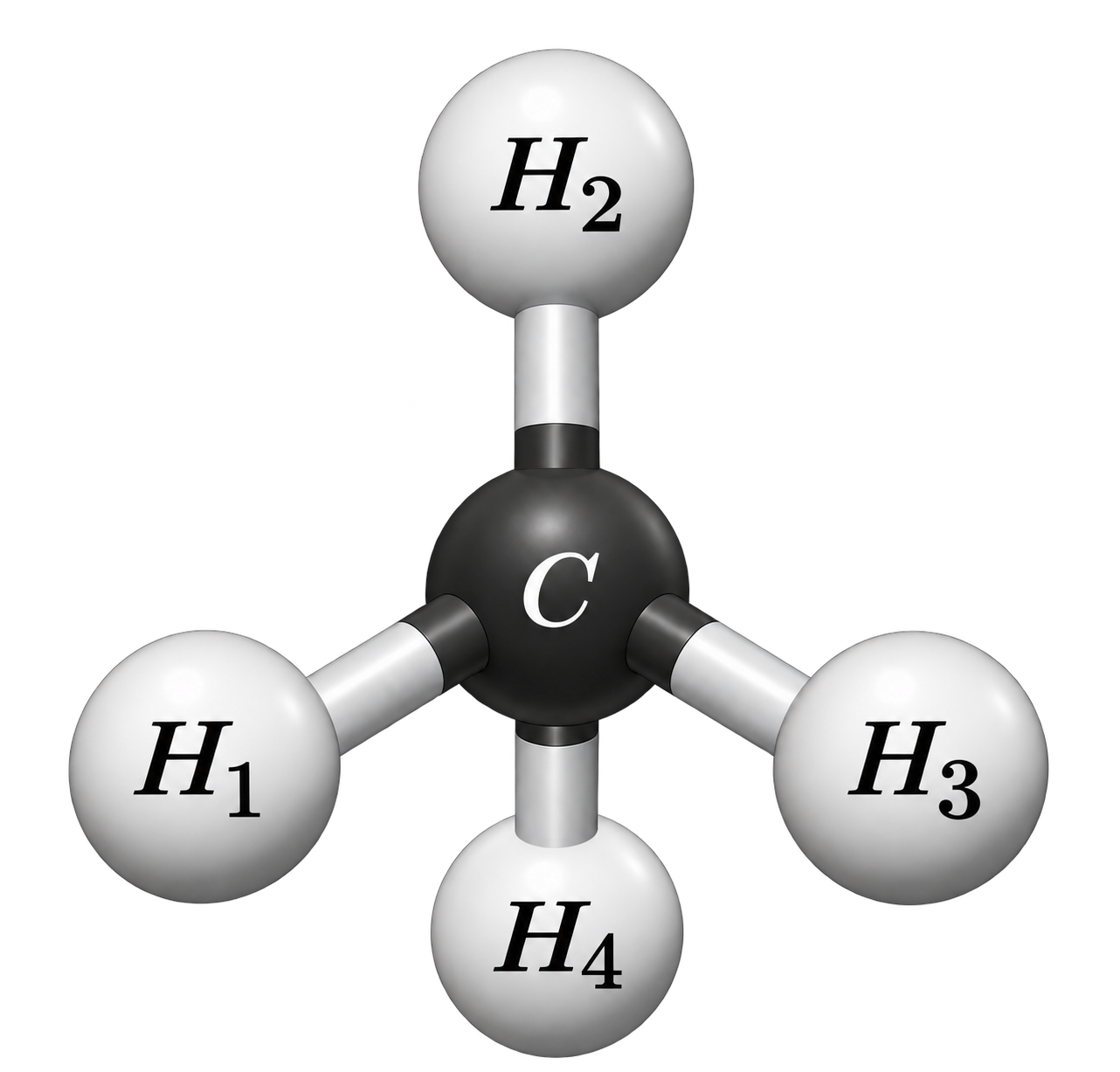}};


\node[statelabel]
    at ($(X)+(0,-1.45)$)
    {$x$};

\node[statelabel]
    at ($(SX)+(0,-1.45)$)
    {$s\!\cdot\!x$};

\node[statelabel]
    at ($(HX)+(0,-1.45)$)
    {$h\!\cdot\!x$};

\node[statelabel]
    at ($(HS)+(0,-1.45)$)
    {$hs\!\cdot\!x$};


\draw[transform]
    ($(X)+(1.38,0)$) --
    ($(SX)+(-1.38,0)$);

\node[oplabel]
    at (2.65,4.42)
    {$s=C_2(z)$};


\begin{scope}[shift={(2.65,5.15)}]

    \draw[
        blue!75!black,
        line width=1.25pt,
        -{Stealth[length=3.5pt,width=3pt]}
    ]
        (0,-0.38) -- (0,0.42);

    \draw[
        blue!75!black,
        line width=1.15pt,
        -{Stealth[length=3.5pt,width=3pt]}
    ]
        (0.36,0.02)
        arc[
            start angle=0,
            end angle=-305,
            x radius=0.36,
            y radius=0.15
        ];

\end{scope}

\node[
    font=\normalsize,
    blue!75!black
]
    at (2.65,5.85)
    {$180^\circ$ about $z$};


\draw[transform]
    ($(X)+(0,-1.72)$) --
    ($(HX)+(0,1.38)$);


\node[oplabel]
    at (-1.65,2.10)
    {$h=C_3([111])$};


\begin{scope}[shift={(-1.65,1.42)},rotate=-54.7356]

    \draw[
        green!55!black,
        line width=1.25pt,
        -{Stealth[length=3.5pt,width=3pt]}
    ]
        (0,-0.38) -- (0,0.42);

    \draw[
        green!55!black,
        line width=1.15pt,
        -{Stealth[length=3.5pt,width=3pt]}
    ]
        (0.36,0.02)
        arc[
            start angle=0,
            end angle=-305,
            x radius=0.36,
            y radius=0.15
        ];

\end{scope}

\node[
    font=\normalsize,
    green!55!black
]
    at (-1.65,0.74)
    {$120^\circ$ about $[111]$};


\draw[transform]
    ($(SX)+(0,-1.72)$) --
    ($(HS)+(0,1.38)$);


\node[oplabel]
    at (6.95,2.10)
    {$h=C_3([111])$};


\begin{scope}[shift={(6.95,1.42)},rotate=-54.7356]

    \draw[
        green!55!black,
        line width=1.25pt,
        -{Stealth[length=3.5pt,width=3pt]}
    ]
        (0,-0.38) -- (0,0.42);

    \draw[
        green!55!black,
        line width=1.15pt,
        -{Stealth[length=3.5pt,width=3pt]}
    ]
        (0.36,0.02)
        arc[
            start angle=0,
            end angle=-305,
            x radius=0.36,
            y radius=0.15
        ];

\end{scope}

\node[
    font=\normalsize,
    green!55!black
]
    at (6.95,0.74)
    {$120^\circ$ about $[111]$};


\draw[transform]
    ($(HX)+(1.38,0)$) --
    ($(HS)+(-1.38,0)$);

\node[oplabel]
    at (2.65,-1.03)
    {$hsh^{-1}=C_2(x)$};


\begin{scope}[shift={(2.65,-1.65)},rotate=-90]

    \draw[
        red!75!black,
        line width=1.25pt,
        -{Stealth[length=3.5pt,width=3pt]}
    ]
        (0,-0.38) -- (0,0.42);

    \draw[
        red!75!black,
        line width=1.15pt,
        -{Stealth[length=3.5pt,width=3pt]}
    ]
        (0.36,0.02)
        arc[
            start angle=0,
            end angle=-305,
            x radius=0.36,
            y radius=0.15
        ];

\end{scope}

\node[
    font=\normalsize,
    red!75!black
]
    at (2.65,-2.30)
    {$180^\circ$ about $x$};

\end{tikzpicture}%
}

\caption{
Conjugate symmetry operations on ideal tetrahedral methane. The tetrahedral symmetry group $T_d\cong S_4$ acts by permuting the four hydrogen vertices. Taking
$H_1=(1,1,1)$,
$H_2=(1,-1,-1)$,
$H_3=(-1,1,-1)$, and
$H_4=(-1,-1,1)$,
the rotation $s=C_2(z)$ induces $(H_1\,H_4)(H_2\,H_3)\in C_{(2^2)}$, while $h=C_3([111])$ induces $(H_2\,H_3\,H_4)\in C_{(3,1)}$. Conjugation gives $hsh^{-1}=C_2(x)$, inducing $(H_1\,H_2)(H_3\,H_4)\in C_{(2^2)}$. The diagram commutes, with
$
h(s\!\cdot\!x)
=
(hsh^{-1})\!\cdot\!(h\!\cdot\!x)
=
hs\!\cdot\!x.
$
}
\label{fig:methane-conjugation}
\end{figure}

Consider six equivalent configurations arranged in a
cycle and let $r$ generate the cyclic shift between them. If hopping in either direction has amplitude $t$, the Hermitian tunneling Hamiltonian is
\[
H_{\rm hop}
=
E_0I+t\bigl(\rho(r)+\rho(r^{-1})\bigr).
\]
For the cyclic group, the QCT is the ordinary finite Fourier transform and yields the energies
\[
E_k
=
E_0+2t\cos\!\left(\frac{2\pi k}{6}\right),
\qquad
k=0,\ldots,5.
\]
If the model is instead formulated using a dihedral permutation--inversion group, the pair $\{r,r^{-1}\}$ belongs to one conjugacy class, and the same Hamiltonian is a class sum.

\subsection{Symmetry-Aware Learning}
Symmetry-aware learning architectures seek to encode known invariances directly into the representation and admissible learned operators without relying solely on their recovery from augmented data ~\cite{CohenWelling2016,BronsteinEtAl2021}.
For data represented by class functions
$
f\in\mathrm{Cl}(\mathcal G),
$
conjugation invariance reduces the ambient function space on $\mathcal G$ to the $d$-dimensional class space, where
$
d
=
|\operatorname{Conj}(\mathcal G)|
=
|\widehat{\mathcal G}|.
$
The character transform coordinates $\widehat a_\Gamma$ in Eq.~\eqref{qct-coords} give a fixed symmetry-adapted representation that can precede either a classical or quantum learning model.

The character transform also provides a spectral parameterization of the entire algebra of class-convolution layers. A trainable central convolutional layer can be parameterized by its response on the irreducible-character sectors. Such a layer has $d$ spectral parameters, compared with $d^2$ entries for an unrestricted linear operator on the class space. When a class-convolution operator serves as a graph shift, repeated linear propagation multiplies each character-sector amplitude by successive powers of its spectral multiplier. This describes the linear propagation step in message-passing architectures~\cite{DefferrardEtAl2016,KipfWelling2017}.

The same representation permits symmetry-adapted spectral filtering and dimensionality reduction. Writing $f$ in normalized class coordinates, let $\Gamma\subseteq\widehat{\mathcal G}$ and define the diagonal selector
$
D_\Gamma
=
\sum_{\Gamma\in\Gamma}
\ket{\Gamma}\!\bra{\Gamma}.
$
The map
$
f\longmapsto D_\Gamma U_{\mathrm{QCT}}f
$
retains only selected irreducible-character sectors, giving a $|\Gamma|$-dimensional representation after the discarded coordinates are removed. When the class-convolution operators describe adjacency or propagation, this restriction can be viewed as symmetry-adapted spectral pooling, where information is retained or discarded at the level of representation sectors, not individual group elements. Multiple learned class-convolution filters can be combined with sector selection, nonlinear processing, or measurement for classification and discrimination tasks.

\subsection{Conjugacy Orbits and Geometric Invariance}

A geometric interpretation is found when \(\mathcal G\) is represented in \(\operatorname{Pin}(V,Q)\) or \(\operatorname{Spin}(V,Q)\), with \(Q\) nondegenerate. Group conjugation is the Clifford sandwich action,
\begin{equation}
R
\longmapsto
URU^{-1},
\qquad
U,R\in
\mathcal G
\subset
\mathcal{C}\ell(V,Q)^\times,
\end{equation}
so conjugacy classes are orbits of Clifford elements under this action. For a rotor \(R=\exp(-B/2)\) with bivector \(B\in\bigwedge^2 V\), conjugation is
\[
URU^{-1}
=
\exp\!\left(
-\frac{UBU^{-1}}{2}
\right).
\]
Conjugation transports the generating bivector by the group action~\cite{DoranLasenby2003}. The stabilizer of \(R\) under conjugation is its centralizer,
$
Z_{\mathcal G}(R)
=
\{U\in\mathcal G:UR=RU\},
$
so its conjugacy orbit has the homogeneous-space description
\[
\mathcal O_R
\simeq
\mathcal G/Z_{\mathcal G}(R).
\]
When \(\mathcal G\) is a Lie subgroup of
\(\operatorname{Spin}(V,Q)\), its tangent space at \(R\) is
$
T_R\mathcal O_R
=
\{[X,R]:X\in\mathfrak g\},
$
while the elements satisfying
$
[X,R]=0
$
form the Lie algebra of the centralizer~\cite{Lee2012SmoothManifolds}.
For finite \(\mathcal G\), these orbits are the discrete conjugacy classes \(C_R\). Passing from individual group elements to conjugacy-class coordinates discards position within an orbit while retaining its class label. The discrete QCT converts these class coordinates into irreducible-character sectors,
\begin{equation}
\ket{C_R}
\overset{U_{\mathrm{QCT}}}{\longmapsto}
\sum_{\Gamma\in\widehat{\mathcal G}}
\widehat a_R(\Gamma)\ket{\Gamma}.
\end{equation}
This orbit-space formulation also motivates a continuous extension of the character transform, replacing the discrete conjugacy-class basis by functions on the conjugacy-orbit space of a compact group and the finite character table by the continuous character transform.

\section{Conclusion}

The quantum character transform is the discrete Fourier transform over the center \(Z(\mathbb{C}[\mathcal G])\) of a group algebra, which confirms its operator-theoretic setting. Conjugacy-class states and irreducible characters form dual bases of the same commutative algebra, and the QCT simultaneously diagonalizes every class-convolution operator. Central operators share a common spectral calculus in which convolution, Hamiltonian evolution, spectral filtering, and other matrix functions are diagonal operations on irreducible-character labels.

I further corroborate that the complexity of the QCT is closely tied to that of the non-abelian group Fourier transform, through its canonical restriction to the center. For finite groups with an efficient QFT, this gives the asymptotic QCT bound
\[
\mathcal O\!\left(
\operatorname{cost}(\mathrm{QFT}_{\mathcal G})
+
\operatorname{poly}\!\left(\log |\mathcal G|,\log(1/\epsilon)\right)
\right).
\]
Whenever the restriction overhead is asymptotically bounded by the QFT cost, this simplifies to $\operatorname{cost}(\mathrm{QCT}_{\mathcal G})
=
\mathcal O\!\left(
\operatorname{cost}(\mathrm{QFT}_{\mathcal G})
\right)$.
The transform applies whenever the relevant operator is conjugacy invariant. Normal Cayley graphs and their quantum walks, class-symmetric many-body Hamiltonians, exchange and molecular permutation dynamics, and symmetry-sector filters are all instances of the same central spectral problem. Still, the scope of the character transform should be stated precisely. Because class-convolution operators are central, they are block-scalar on irreducible sectors but do not retain the internal matrix-index information present in the full non-abelian Fourier transform. For this reason, the QCT should not be oversold as a universal replacement for non-abelian Fourier sampling in hidden subgroup problems. Over groups such as \(S_n\), there are strong limitations on what Fourier-sampling approaches can learn from single coset states, and those limitations should temper any claim that central class-operator measurements alone can overcome the known barriers surrounding graph-isomorphism-type instances. The correct message is narrower and stronger; central operators provide the right tool for conjugacy-invariant spectral questions, while non-central operators remain necessary for finer, less understood measurements beyond the class algebra.

The restriction method leaves a central complexity question unresolved. Most importantly, it establishes an upper bound inherited from the quantum Fourier transform but does not settle the optimal complexity of generating character tables on a quantum computer. Subsequent work demonstrates that QCT algorithms can improve upon this inherited bound without passing through the full group Fourier transform~\cite{RathoreZhou2026}. For groups such as \(S_n\), the gap between the group and class register sizes means that character transform algorithms specialized to the group and operating entirely in the compressed class space could use substantially lower resources than restriction-based algorithms. Discovering if more sub-Fourier implementations exist across different groups, how efficiently their character data can be computed, and whether these reductions translate into end-to-end advantages for symmetry-adapted simulation is the foreseeable future for the character transform.

\noindent
\newline
\textbf{Acknowledgements}.
This research was supported by the TRIO Ronald E. McNair Postbaccalaureate Achievement Program at North Carolina State University in summer 2025, which was funded by the U.S. Department of Education under grant number P217A220189-24.

\bibliographystyle{plain}
\bibliography{references}

@article{bastidas2025,
author = {Bastidas, Victor and Fitzpatrick, Nathan and Joven, K. and Rossi, Zane and Islam, Shariful and Voorhis, Troy and Chuang, Isaac and Liu, Yuan},
year = {2025},
month = {05},
pages = {},
title = {Unification of finite symmetries in the simulation of many-body systems on quantum computers},
volume = {111},
journal = {Physical Review A},
doi = {10.1103/PhysRevA.111.052433}
}

@article{frobenius1896gruppencharaktere,
  author  = {Frobenius, Ferdinand Georg},
  title   = {\"{U}ber die Gruppencharaktere},
  journal = {Sitzungsberichte der K\"{o}niglich Preu{\ss}ischen Akademie der Wissenschaften zu Berlin},
  pages   = {985--1021},
  year    = {1896}
}

@article{frobenius1897gruppendeterminante,
  author  = {Frobenius, Ferdinand Georg},
  title   = {\"{U}ber die Darstellung der endlichen Gruppen durch lineare Substitutionen},
  journal = {Sitzungsberichte der K\"{o}niglich Preu{\ss}ischen Akademie der Wissenschaften zu Berlin},
  pages   = {994--1015},
  year    = {1897}
}

@misc{RathoreZhou2026,
      title={An efficient Hamiltonian-based quantum algorithm for characters of the symmetric group}, 
      author={Dikshant Rathore and Leo Zhou},
      year={2026},
      eprint={2610.05752},
      archivePrefix={arXiv},
      primaryClass={quant-ph},
      url={https://arxiv.org/abs/2610.05752}, 
}

@book{burnside1897theory,
  author    = {Burnside, William},
  title     = {Theory of Groups of Finite Order},
  edition   = {1st},
  publisher = {Cambridge University Press},
  address   = {Cambridge},
  year      = {1897}
}

@article{schur1904darstellung,
  author  = {Schur, Issai},
  title   = {\"{U}ber die Darstellung der endlichen Gruppen durch abgeleitete lineare Substitutionen},
  journal = {Journal f\"{u}r die reine und angewandte Mathematik},
  volume  = {127},
  pages   = {20--50},
  year    = {1904}
}

@article{schur1905neue,
  author  = {Schur, Issai},
  title   = {Neue Begr\"{u}ndung der Theorie der Gruppencharaktere},
  journal = {Sitzungsberichte der K\"{o}niglich Preu{\ss}ischen Akademie der Wissenschaften zu Berlin},
  pages   = {406--432},
  year    = {1905}
}

@book{burnside1911theory,
  author    = {Burnside, William},
  title     = {Theory of Groups of Finite Order},
  edition   = {2nd},
  publisher = {Cambridge University Press},
  address   = {Cambridge},
  year      = {1911}
}

@book{hamermesh1962group,
  title={Group Theory and Its Application to Physical Problems},
  author={Hamermesh, Morton},
  year={1962},
  publisher={Addison-Wesley}
}

@book{birman1974theory,
  title={Theory of Crystal Space Groups and Infra-Red and Raman Lattice Processes of Insulating Crystals},
  author={Birman, Joseph L.},
  year={1974},
  publisher={Springer-Verlag}
}

@book{Bunker1979,
  author    = {Philip R. Bunker},
  title     = {Molecular Symmetry and Spectroscopy},
  publisher = {Academic Press},
  address   = {New York},
  year      = {1979},
  isbn      = {978-0-12-141350-7},
  doi       = {10.1016/B978-0-12-141350-7.X5001-0}
}

@inproceedings{Beals1997,
  author    = {Robert Beals},
  title     = {Quantum Computation of Fourier Transforms over Symmetric Groups},
  booktitle = {Proceedings of the Twenty-Ninth Annual ACM Symposium on Theory of Computing},
  series    = {STOC '97},
  pages     = {48--53},
  publisher = {Association for Computing Machinery},
  year      = {1997},
  doi       = {10.1145/258533.258548}
}

@article{BaconChuangHarrow2006,
  author  = {Dave Bacon and Isaac L. Chuang and Aram W. Harrow},
  title   = {Efficient Quantum Circuits for Schur and Clebsch-Gordan Transforms},
  journal = {Physical Review Letters},
  volume  = {97},
  number  = {17},
  pages   = {170502},
  year    = {2006},
  doi     = {10.1103/PhysRevLett.97.170502}
}

@article{Kuperberg2005,
  author  = {Greg Kuperberg},
  title   = {A Subexponential-Time Quantum Algorithm for the Dihedral Hidden Subgroup Problem},
  journal = {SIAM Journal on Computing},
  volume  = {35},
  number  = {1},
  pages   = {170--188},
  year    = {2005},
  doi     = {10.1137/S0097539703436345}
}

@article{jamiolkowski1972linear,
  title={Linear transformations which preserve trace and positive semidefiniteness of operators},
  author={Jamio{\l}kowski, Andrzej},
  journal={Reports on Mathematical Physics},
  volume={3},
  number={4},
  pages={275--278},
  year={1972},
  publisher={Elsevier},
  doi={10.1016/0034-4877(72)90011-0}
}

@article{choi1975completely,
  title={Completely positive linear maps on complex matrices},
  author={Choi, Man-Duen},
  journal={Linear Algebra and its Applications},
  volume={10},
  number={3},
  pages={285--290},
  year={1975},
  publisher={Elsevier},
  doi={10.1016/0024-3795(75)90075-0}
}

@book{cotton1990chemical,
  author    = {Cotton, F. Albert},
  title     = {Chemical Applications of Group Theory},
  edition   = {3rd},
  year      = {1990},
  publisher = {Wiley-Interscience},
  address   = {New York},
  isbn      = {978-0471510949}
}

@book{Diaconis1988,
  author    = {Diaconis, Persi},
  title     = {Group Representations in Probability and Statistics},
  publisher = {Institute of Mathematical Statistics},
  year      = {1988}
}

@book{Tinkham2003,
  author    = {Tinkham, Michael},
  title     = {Group Theory and Quantum Mechanics},
  publisher = {Dover},
  year      = {2003}
}

@book{BunkerJensen1998,
  author    = {Bunker, Philip R. and Jensen, Per},
  title     = {Molecular Symmetry and Spectroscopy},
  edition   = {2nd},
  publisher = {NRC Research Press},
  year      = {1998}
}

@book{BrouwerCohenNeumaier1989,
  author    = {Brouwer, Andries E. and Cohen, Arjeh M. and Neumaier, Arnold},
  title     = {Distance-Regular Graphs},
  series    = {Ergebnisse der Mathematik und ihrer Grenzgebiete. 3. Folge},
  volume    = {18},
  publisher = {Springer-Verlag},
  address   = {Berlin, Heidelberg},
  year      = {1989},
  isbn      = {978-3-540-50619-5}
}

@book{Godsil1993,
  author    = {Godsil, Christopher D.},
  title     = {Algebraic Combinatorics},
  series    = {Chapman \& Hall/CRC Mathematics Series},
  publisher = {Chapman \& Hall},
  address   = {New York},
  year      = {1993},
  isbn      = {978-0-412-04131-0}
}

@article{ArnadottirGodsil2023,
  author    = {\'{A}rnad\'{o}ttir, Arnbj\"{o}rg Soff\'{i}a and Godsil, Chris},
  title     = {A Note on Eigenvalues of {C}ayley Graphs},
  journal   = {Linear Algebra and its Applications},
  volume    = {721},
  pages     = {240--249},
  year      = {2025},
  doi       = {10.1016/j.laa.2023.12.002}
}

@article{Bravyi2025Characters,
  title = {Classical and Quantum Algorithms for Characters of the Symmetric Group},
  author = {Bravyi, Sergey and Gosset, David and Havlicek, Vojtech and Schatzki, Louis},
  journal = {PRX Quantum},
  volume = {6},
  issue = {3},
  pages = {030323},
  numpages = {15},
  year = {2025},
  month = {Aug},
  publisher = {American Physical Society},
  doi = {10.1103/bq28-r2r7},
  url = {https://link.aps.org/doi/10.1103/bq28-r2r7}
}

@article{EttingerHoyerKnill2004,
  author  = {Mark Ettinger and Peter H{\o}yer and Emanuel Knill},
  title   = {The Quantum Query Complexity of the Hidden Subgroup Problem Is Polynomial},
  journal = {Information Processing Letters},
  volume  = {91},
  number  = {1},
  pages   = {43--48},
  year    = {2004},
  doi     = {10.1016/j.ipl.2004.01.024}
}

@article{MooreRussellSchulman2008,
  author  = {Cristopher Moore and Alexander Russell and Leonard J. Schulman},
  title   = {The Symmetric Group Defies Strong Fourier Sampling},
  journal = {SIAM Journal on Computing},
  volume  = {37},
  number  = {6},
  pages   = {1842--1864},
  year    = {2008},
  doi     = {10.1137/050644896}
}

@article{Hoyer2002,
  author  = {Peter H{\o}yer},
  title   = {The Hidden Subgroup Problem and Quantum Computation Using Group Representations},
  journal = {SIAM Journal on Computing},
  volume  = {31},
  number  = {6},
  pages   = {1775--1787},
  year    = {2002},
  doi     = {10.1137/S009753970139450X}
}

@book{Serre1977,
  author    = {Jean-Pierre Serre},
  title     = {Linear Representations of Finite Groups},
  series    = {Graduate Texts in Mathematics},
  volume    = {42},
  publisher = {Springer},
  year      = {1977},
  doi       = {10.1007/978-1-4684-9458-7}
}

@article{Sunada1985,
  author  = {Toshikazu Sunada},
  title   = {Riemannian Coverings and Isospectral Manifolds},
  journal = {Annals of Mathematics},
  volume  = {121},
  number  = {1},
  pages   = {169--186},
  year    = {1985},
  doi     = {10.2307/1971195}
}

@article{Perlis1977,
  author  = {Robert Perlis},
  title   = {On the Equation \(\zeta_K(s)=\zeta_{K'}(s)\)},
  journal = {Journal of Number Theory},
  volume  = {9},
  number  = {3},
  pages   = {342--360},
  year    = {1977},
  doi     = {10.1016/0022-314X(77)90070-1}
}

@unpublished{BermejoVegaZatloukal2015,
  author        = {Juan Bermejo-Vega and Kevin C. Zatloukal},
  title         = {Abelian Hypergroups and Quantum Computation},
  year          = {2015},
  note          = {arXiv:1509.05806 [quant-ph]},
  eprint        = {1509.05806},
  archivePrefix = {arXiv},
  primaryClass  = {quant-ph},
  url           = {https://arxiv.org/abs/1509.05806}
}

@article{Castelazo2022GroupConvolution,
  author  = {Grecia Castelazo and Quynh T. Nguyen and Giacomo De Palma
             and Dirk Englund and Seth Lloyd and Bobak T. Kiani},
  title   = {Quantum Algorithms for Group Convolution, Cross-Correlation,
             and Equivariant Transformations},
  journal = {Physical Review A},
  volume  = {106},
  number  = {3},
  pages   = {032402},
  year    = {2022},
  month   = {September},
  doi     = {10.1103/PhysRevA.106.032402}
}

@inproceedings{Gilyen2019QSVT,
  author    = {Andr{\'a}s Gily{\'e}n and Yuan Su and Guang Hao Low
               and Nathan Wiebe},
  title     = {Quantum Singular Value Transformation and Beyond:
               Exponential Improvements for Quantum Matrix Arithmetics},
  booktitle = {Proceedings of the 51st Annual ACM SIGACT Symposium
               on Theory of Computing},
  pages     = {193--204},
  year      = {2019},
  doi       = {10.1145/3313276.3316366}
}

@article{mcardle2020quantum,
  title   = {Quantum Computational Chemistry},
  author  = {McArdle, Sam and Endo, Suguru and Aspuru-Guzik, Al{\'a}n and Benjamin, Simon C. and Yuan, Xiao},
  journal = {Reviews of Modern Physics},
  volume  = {92},
  number  = {1},
  pages   = {015003},
  year    = {2020},
  doi     = {10.1103/RevModPhys.92.015003}
}

@article{cao2019quantum,
  title   = {Quantum Chemistry in the Age of Quantum Computing},
  author  = {Cao, Yudong and Romero, Jonathan and Olson, Jonathan P. and Degroote, Matthias and Johnson, Peter D. and Kieferov{\'a}, M{\'a}ria and Kivlichan, Ian D. and Menke, Tim and Peropadre, Borja and Sawaya, Nicolas P. D. and Sim, Sukin and Veis, Libor and Aspuru-Guzik, Al{\'a}n},
  journal = {Chemical Reviews},
  volume  = {119},
  number  = {19},
  pages   = {10856--10915},
  year    = {2019},
  doi     = {10.1021/acs.chemrev.8b00803}
}

@article{Bacon2008,
  author  = {Bacon, Dave},
  title   = {How a Clebsch--Gordan Transform Helps to Solve the
             Heisenberg Hidden Subgroup Problem},
  journal = {Quantum Information and Computation},
  volume  = {8},
  number  = {5},
  pages   = {438--467},
  year    = {2008},
  doi     = {10.26421/QIC8.5-6},
  eprint  = {quant-ph/0612107},
  archivePrefix = {arXiv}
}

@article{HallgrenEtAl2010,
  author  = {Hallgren, Sean and Moore, Cristopher and R{\"o}tteler, Martin
             and Russell, Alexander and Sen, Pranab},
  title   = {Limitations of Quantum Coset States for Graph Isomorphism},
  journal = {Journal of the ACM},
  volume  = {57},
  number  = {6},
  articleno = {34},
  year    = {2010},
  doi     = {10.1145/1857914.1857918}
}

@article{MooreRussellSniady2010,
  author  = {Moore, Cristopher and Russell, Alexander and {\'S}niady, Piotr},
  title   = {On the Impossibility of a Quantum Sieve Algorithm for Graph Isomorphism},
  journal = {SIAM Journal on Computing},
  volume  = {39},
  number  = {6},
  pages   = {2377--2396},
  year    = {2010},
  doi     = {10.1137/080724101}
}

@book{Terras1999,
  author    = {Terras, Audrey},
  title     = {Fourier Analysis on Finite Groups and Applications},
  publisher = {Cambridge University Press},
  year      = {1999},
  series    = {London Mathematical Society Student Texts},
  volume    = {43},
  doi       = {10.1017/CBO9780511626265}
}

@article{Babai1979,
  author  = {Babai, L{\'a}szl{\'o}},
  title   = {Spectra of Cayley Graphs},
  journal = {Journal of Combinatorial Theory, Series B},
  volume  = {27},
  number  = {2},
  pages   = {180--189},
  year    = {1979},
  doi     = {10.1016/0095-8956(79)90079-0}
}

@misc{gromada2026quantumassociationschemes,
      title={Quantum association schemes}, 
      author={Daniel Gromada},
      year={2024},
      eprint={2404.06157},
      archivePrefix={arXiv},
      primaryClass={math.QA},
      url={https://arxiv.org/abs/2404.06157}, 
}

@inproceedings{CohenWelling2016,
  author    = {Cohen, Taco S. and Welling, Max},
  title     = {Group Equivariant Convolutional Networks},
  booktitle = {Proceedings of the 33rd International Conference on Machine Learning},
  series    = {Proceedings of Machine Learning Research},
  volume    = {48},
  pages     = {2990--2999},
  year      = {2016}
}

@article{BronsteinEtAl2021,
  author  = {Bronstein, Michael M. and Bruna, Joan and Cohen, Taco
             and Veli{\v{c}}kovi{\'c}, Petar},
  title   = {Geometric Deep Learning: Grids, Groups, Graphs, Geodesics,
             and Gauges},
  journal = {arXiv preprint arXiv:2104.13478},
  year    = {2021}
}

@inproceedings{DefferrardEtAl2016,
  author    = {Defferrard, Micha{\"e}l and Bresson, Xavier
               and Vandergheynst, Pierre},
  title     = {Convolutional Neural Networks on Graphs with Fast Localized
               Spectral Filtering},
  booktitle = {Advances in Neural Information Processing Systems},
  volume    = {29},
  year      = {2016}
}

@inproceedings{KipfWelling2017,
  author    = {Kipf, Thomas N. and Welling, Max},
  title     = {Semi-Supervised Classification with Graph Convolutional Networks},
  booktitle = {International Conference on Learning Representations},
  year      = {2017}
}

@article{HarrisonScott2003,
  author  = {Harrison, P. F. and Scott, W. G.},
  title   = {Permutation Symmetry, Tri-Bimaximal Neutrino Mixing and the
             \(S_3\) Group Characters},
  journal = {Physics Letters B},
  volume  = {557},
  number  = {1--2},
  pages   = {76--86},
  year    = {2003},
  doi     = {10.1016/S0370-2693(03)00183-7}
}

@article{clausen1989fast,
  author    = {Clausen, Michael},
  title     = {Fast {Fourier} Transforms for Metabelian Groups},
  journal   = {SIAM Journal on Computing},
  volume    = {18},
  number    = {3},
  pages     = {584--593},
  year      = {1989},
  doi       = {10.1137/0218040},
  url       = {https://doi.org/10.1137/0218040}
}

@misc{hoyer1997efficient,
  author        = {Hoyer, Peter},
  title         = {Efficient Quantum Transforms},
  year          = {1997},
  eprint        = {quant-ph/9702028},
  archivePrefix = {arXiv},
  primaryClass  = {quant-ph}
}

@article{KawanoSekigawa2016,
  author  = {Yasuhito Kawano and Hiroshi Sekigawa},
  title   = {Quantum Fourier Transform over Symmetric Groups---Improved Result},
  journal = {Journal of Symbolic Computation},
  volume  = {75},
  pages   = {219--243},
  year    = {2016},
  doi     = {10.1016/j.jsc.2015.11.016}
}

@article{BruinsmaGrinkoOzols2026,
  author        = {Carli Bruinsma and Dmitry Grinko and Maris Ozols},
  title         = {Quantum Fourier Transform for the Symmetric Group},
  journal       = {arXiv preprint arXiv:2608.28569},
  year          = {2026},
  eprint        = {2608.28569},
  archivePrefix = {arXiv},
  primaryClass  = {quant-ph}
}

@article{Katriel1998ClassSum,
  author  = {Jacob Katriel},
  title   = {Class-Sum Products in the Symmetric Group: Combinatorial Interpretation of the Reduced Class Coefficients},
  journal = {International Journal of Quantum Chemistry},
  volume  = {68},
  number  = {2},
  pages   = {103--118},
  year    = {1998}
}

@article{Gassmann1926,
  author  = {Gassmann, Fritz},
  title   = {Bemerkungen zur vorstehenden Arbeit von Hurwitz:
             {\"U}ber Beziehungen zwischen den Primidealen eines algebraischen
             K{\"o}rpers und den Substitutionen seiner Gruppe},
  journal = {Mathematische Zeitschrift},
  volume  = {25},
  pages   = {665--675},
  year    = {1926}
}

@article{ShumanEtAl2013,
  author  = {Shuman, David I. and Narang, Sunil K. and Frossard, Pascal
             and Ortega, Antonio and Vandergheynst, Pierre},
  title   = {The Emerging Field of Signal Processing on Graphs:
             Extending High-Dimensional Data Analysis to Networks
             and Other Irregular Domains},
  journal = {IEEE Signal Processing Magazine},
  volume  = {30},
  number  = {3},
  pages   = {83--98},
  year    = {2013},
  doi     = {10.1109/MSP.2012.2235192}
}

@article{HammondEtAl2011,
  author  = {Hammond, David K. and Vandergheynst, Pierre
             and Gribonval, R{\'e}mi},
  title   = {Wavelets on Graphs via Spectral Graph Theory},
  journal = {Applied and Computational Harmonic Analysis},
  volume  = {30},
  number  = {2},
  pages   = {129--150},
  year    = {2011},
  doi     = {10.1016/j.acha.2010.04.005}
}

@article{FarhiGutmann1998,
  author  = {Farhi, Edward and Gutmann, Sam},
  title   = {Quantum Computation and Decision Trees},
  journal = {Physical Review A},
  volume  = {58},
  number  = {2},
  pages   = {915--928},
  year    = {1998},
  doi     = {10.1103/PhysRevA.58.915}
}

@book{DoranLasenby2003,
  author    = {Doran, Chris and Lasenby, Anthony},
  title     = {Geometric Algebra for Physicists},
  publisher = {Cambridge University Press},
  year      = {2003},
  doi       = {10.1017/CBO9780511807497}
}

@book{Lee2012SmoothManifolds,
  author    = {Lee, John M.},
  title     = {Introduction to Smooth Manifolds},
  edition   = {2},
  series    = {Graduate Texts in Mathematics},
  volume    = {218},
  publisher = {Springer},
  year      = {2012},
  doi       = {10.1007/978-1-4419-9982-5}
}

@book{FultonHarris1991,
  author    = {Fulton, William and Harris, Joe},
  title     = {Representation Theory: A First Course},
  series    = {Graduate Texts in Mathematics},
  volume    = {129},
  publisher = {Springer-Verlag},
  address   = {New York},
  year      = {1991},
  doi       = {10.1007/978-1-4612-0979-9},
  isbn      = {978-0-387-97495-8}
}

@article{BeckEtAl2024,
  author  = {Beck, Kathryn and Ghandehari, Mahya and Hudson, Skyler
             and Paltenstein, Jenna},
  title   = {Frames for Signal Processing on Cayley Graphs},
  journal = {Journal of Fourier Analysis and Applications},
  volume  = {30},
  number  = {6},
  pages   = {66},
  year    = {2024},
  doi     = {10.1007/s00041-024-10128-5},
  eprint  = {2303.02812},
  archivePrefix = {arXiv},
  primaryClass  = {cs.IT}
}

\end{document}